\newtheorem{definition}{Definition}
\newtheorem{lemma}{Lemma}
\newtheorem{theorem}{Theorem}
\newtheorem{observation}{Observation}
\numberwithin{figure}{section} \numberwithin{definition}{section}
\numberwithin{observation}{section} \numberwithin{lemma}{section}
\numberwithin{theorem}{section} \numberwithin{proposition}{section}
\numberwithin{conjecture}{section} \numberwithin{table}{section}
\makeatletter \@addtoreset{equation}{section} \makeatother
\begin{document}

\title{
{An upper bound for the crossing number of bubble-sort graph
$B_n$}\footnote{The research is supported by NSFC of China
(No.60973014,\ 61170303)}}
\author{
Baigong Zheng, \ Yuansheng Yang\footnote {corresponding
author's email : yangys@dlut.edu.cn}, \ Xirong Xu \\
School of Computer Science and Technology\\
Dalian University of Technology\\ Dalian, 116024, P.R. China }

\date{}
\maketitle
\begin{abstract}
The {\it crossing number} of a graph $G$ is the minimum number of
pairwise intersections of edges in a drawing of $G$. Motivated by
the recent work [Faria, L., Figueiredo, C.M.H. de, S\'{y}kora, O.,
Vrt'o, I.: An improved upper bound on the crossing number of the
hypercube. J. Graph Theory 59, 145--161 (2008)], we give an upper
bound of the crossing number of $n$-dimensional bubble-sort graph
$B_n$.
\bigskip

\noindent {\bf Keywords:} {\it Crossing number}; {\it Drawing}; {\it
Bubble-sort graph}
\end{abstract}

\section{Introduction}
\indent \indent Symmetric graphs, such as the $n$-dimensional
Boolean hypercube $Q_n$, and the cube-connected cycles, have been
widely used as processor or communication interconnection networks.
For designing, analyzing and improving such networks, Akers and
Krishnamurthy proposed a formal group-theoretic model in
\cite{AK89}. Based on the model, the $n$-dimensional bubble-sort
graph is proposed as one of those classes of networks that have
better performance, as measured by diameter, connectivity, fault
tolerance, etc, than the popular $Q_n$. Thus, it has drawn a great
deal of attention of research (see \cite{AK07,SLHTH10,CLS10}).

The {\it n-dimensional bubble-sort graph} $B_n$ has $n!$ vertices
labeled by distinct permutations on $\{1,2,\cdots,n\}$. Let
$x=x_1x_2\cdots x_n$ be the vertex of $B_n$. Two vertices
$x=x_1x_2\cdots x_n$ and $y=y_1y_2\cdots y_n$ in $B_n$ are adjacent
if and only if $x_i=y_{i+1}$ and $x_{i+1}=y_i$ for some $i$ and
$x_j=y_j$ for all $j\neq i$ and $i+1$. It is obvious $B_n$ is an
$(n-1)$-regular graph.

The crossing number of a graph $G$, denoted by $cr(G)$, is the
smallest number of pairwise crossings of edges among all drawings of
$G$ in the plane. In the past thirty years, it turned out that
crossing number played an important role not only in various fields
of discrete and computational geometry, but also in the design of
VLSI circuits \cite{BL84} and wiring layout problems. However,
computing the crossing number was proved to be NP-complete by Garey
and Johnson \cite{GJ83}. Thus, it is not surprising that the exact
crossing numbers are known for graphs of few families and that the
arguments often strongly depend on their structures (see
\cite{LYLH06,ZLYD08}).

Concerned with the upper bound of the crossing number of $Q_n$,
there is a long-standing conjecture proposed by Erd\H{o}s and Guy
\cite{EG73} in 1973:
$$cr(Q_n)\leq \frac{5}{32}4^n-\lfloor\frac{n^2+1}{2}\rfloor 2^{n-2}.$$
This conjecture remains open till L. Faria, C.M.H de Figueiredo, O.
S\'{y}kora, and I. Vrt'o in \cite{FFSV08} constructed a good drawing
of $Q_n$ which gives the upper bound. However, the upper bound of
the crossing number of $B_n$ is still unknown.

Motivated by \cite{FFSV08}, we give an upper bound of the crossing
number of the bubble-sort graph $B_n$ by constructing a drawing of
$B_n$ in the plane in this article. In Section 2, we introduce some
technical notations and tools, while in Section 3 we give an upper
bound of the crossing number of $B_n$ for $n\geq 5$.

\section{Preliminaries}

\indent \indent We use $V(G)$ and $E(G)$ to denote the vertex set
and the edge set of $G$, respectively. A drawing of $G$ is said to
be a $good$ drawing, provided that no edge crosses itself, no
adjacent edges cross each other, no two edges cross more than once,
and no three edges cross in a point. It is well known that the
crossing number of a graph is attained only in good drawings of the
graph. So we always assume that all drawings throughout this article
are good drawings. For a good drawing $D$ of a graph $G$, we denote
by $\nu(D)$ the number of crossings in $D$. It is clear that
$cr(G)\le\nu(D)$.

Let $A$ and $B$ be two disjoint subsets of an edge set $E$. The
number of the crossings formed by an edge in $A$ and another edge in
$B$ is denoted by $\nu_D(A,B)$ in a drawing $D$. The number of the
crossings that involve a pair of edges in $A$ is denoted by
$\nu_D(A)$. Then $\nu_D(A\cup B)=\nu_D(A)+\nu_D(B)+\nu_D(A,B)$.

Next, we present some topological results. Using the results, we
derive expressions for counting the crossings of the drawing that is
constructed by us in the following section. For $n\geq 6$, let
$P=(k_1,k_2,\cdots,k_{n-2})$ be an arbitrary permutation on
$\{2,3,\cdots,n-1\}$. For $0\leq a\leq n-2$, we define a structure
$mesh$ $M_{n,a}$ depending on $P$ in the real plane $\mathbb{R}^2$
which is used in the counting-crossing process. Join the points
$(0,1)$, $(0,2)$, $(0,3),\cdots,(0,n-1)$ and $(0,n)$ of the vertical
real axis with a line.  For $1\leq i\leq a$, let $S_{l,i}=\{l_{ij}:
j\in \{1,2,\cdots ,n\}-\{k_i\}\}$ be a group of non-vertical
parallel semi-straight lines in the left semi-plane, such that the
point $(0,j)$ belongs to $l_{ij}$ and $l_{11},\cdots, l_{a1}$ lie
anticlockwise around point $(0,1)$. For $a+1\leq i\leq n-2$, let
$S_{r,i}=\{r_{ij}:j\in \{1,2,\cdots ,n\}-\{k_i\}\}$ be a group of
non-vertical parallel semi-straight lines in the right semi-plane,
such that the point $(0,j)$ belongs to $r_{ij}$, and $r_{a+1\
1},\cdots, r_{n-2\ 1}$ lie clockwise around point $(0,1)$. The edge
$l_{i\ k_i}$ ($1\leq i\leq a$) ($r_{i\ k_i}$ ($a+1\leq i\leq n-2$))
 is called the ``lost'' edge with respect to point $(0,k_i)$. In
Figure \ref{fig:(M61&M62)} we show drawings of $M_{6,1}$ and
$M_{6,2}$ depending on permutation $(2,4,5,3)$.

\begin{figure}[ht]
\centering\includegraphics[scale=1]{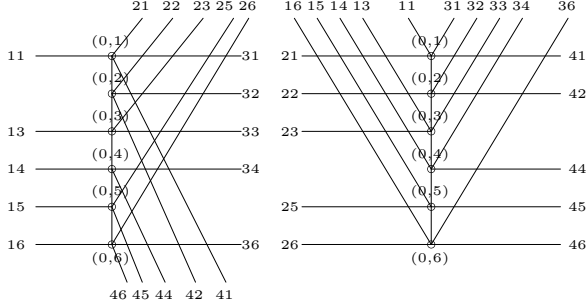}\caption{\small{Drawings
of $M_{6,1}$ and $M_{6,2}$ depending on
$(2,4,5,3)$}}\label{fig:(M61&M62)}
\end{figure}

By counting the crossings in $M_{n,a}$, we can get

\begin{lemma}\label{lemma: sum<max}
For positive integers $n\geq 6$ and $0\leq a\leq n-2$,
$$\begin{array}{rlll} &\nu_D({S_{l,i_1}},S_{l,i_2})\\=&\left
\{\begin{array}{llll}
{n\choose2}-(n-k_{i_2})-(k_{i_1}-1)=\frac{1}{2}n^2-\frac{3}{2}n+1+k_{i_2}-k_{i_1},\ if\  k_{i_1}<k_{i_2}\\
{n\choose2}-(n-k_{i_2})-(k_{i_1}-2)=\frac{1}{2}n^2-\frac{3}{2}n+2+k_{i_2}-k_{i_1},\
if\ k_{i_1}>k_{i_2}
              \end{array}
           \right.\\
\end{array}$$
for $1\leq i_1<i_2\leq a$, and
$$\begin{array}{rlll}
&\nu_D({S_{r,i_1}},S_{r,i_2})\\=&\left \{\begin{array}{llll}
{n\choose2}-(n-k_{i_2})-(k_{i_1}-1)=\frac{1}{2}n^2-\frac{3}{2}n+1+k_{i_2}-k_{i_1},\ if\  k_{i_1}<k_{i_2}\\
{n\choose2}-(n-k_{i_2})-(k_{i_1}-2)=\frac{1}{2}n^2-\frac{3}{2}n+2+k_{i_2}-k_{i_1},\
if\ k_{i_1}>k_{i_2}
              \end{array}
           \right .\\
\end{array}$$
for $a+1\leq i_1<i_2\leq n-2$.

\end{lemma}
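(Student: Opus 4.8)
The plan is to obtain $\nu_D(S_{l,i_1},S_{l,i_2})$ by first counting crossings as if both groups were \emph{complete} pencils of $n$ parallel rays---one ray through each of $(0,1),\dots,(0,n)$---and then correcting for the two ``lost'' rays by inclusion--exclusion.

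First I would establish the elementary crossing criterion. Write the ray of $S_{l,i_1}$ through $(0,p)$ as $y=m_1x+p$ and the ray of $S_{l,i_2}$ through $(0,q)$ as $y=m_2x+q$, where the common slopes $m_1\neq m_2$ are those of the two pencils. Their unique intersection has abscissa $x=(q-p)/(m_1-m_2)$, and since both are semi-straight lines issuing into the left half-plane, this intersection lies on both rays exactly when $x<0$ (if $p=q$ the rays meet only at the shared point $(0,p)$ and, being adjacent, contribute no crossing). The rotational arrangement of the groups about $(0,1)$ linearly orders their slopes; for $i_1<i_2$ this order yields $m_1>m_2$, so the criterion reduces to $q<p$. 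Summing over all ordered pairs of distinct base points then gives exactly $\binom{n}{2}$ crossings for the two complete pencils, since precisely half of the $n(n-1)$ such pairs satisfy $q<p$.

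Next I would subtract the contribution of the lost rays. The group $S_{l,i_1}$ omits its ray through $(0,k_{i_1})$ and $S_{l,i_2}$ omits its ray through $(0,k_{i_2})$, so the true count is $\binom{n}{2}$ diminished by the crossings that used a lost ray, with the crossing between the two lost rays added back once. Under the criterion $q<p$: the lost ray of $S_{l,i_1}$ (base point $p=k_{i_1}$) would cross a ray of $S_{l,i_2}$ whenever $q<k_{i_1}$, i.e. $k_{i_1}-1$ times; the lost ray of $S_{l,i_2}$ (base point $q=k_{i_2}$) would cross a ray of $S_{l,i_1}$ whenever $p>k_{i_2}$, i.e. $n-k_{i_2}$ times; and the crossing between the two lost rays themselves exists iff $k_{i_2}<k_{i_1}$. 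Hence the correction adds back $0$ when $k_{i_1}<k_{i_2}$ and $1$ when $k_{i_1}>k_{i_2}$, producing $\binom{n}{2}-(n-k_{i_2})-(k_{i_1}-1)$ and $\binom{n}{2}-(n-k_{i_2})-(k_{i_1}-2)$ respectively; the stated closed forms then follow from $\binom{n}{2}=\tfrac12n^2-\tfrac12n$.

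Finally, the right-hand groups are handled by reflecting across the vertical axis: this exchanges the two half-planes and turns the prescribed clockwise order of $r_{a+1\,1},\dots,r_{n-2\,1}$ into an anticlockwise one, so the same slope comparison $m_1>m_2$ for $i_1<i_2$ and the same crossing criterion $q<p$ result, giving the identical formula for $\nu_D(S_{r,i_1},S_{r,i_2})$. The one step I expect to require real care is the determination of the slope order from the drawing---equivalently, verifying that the arrangement forces $q<p$ (rather than $q>p$) for $i_1<i_2$---since this is exactly what fixes the asymmetric roles of $k_{i_1}$ and $k_{i_2}$ and isolates the single crossing that distinguishes the two branches of the formula.
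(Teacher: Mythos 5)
Your proof is correct and is essentially the paper's own (implicit) argument: the paper states this lemma with no proof beyond the phrase ``by counting the crossings in $M_{n,a}$,'' and your inclusion--exclusion computation---$\binom{n}{2}$ crossings for two complete pencils, minus the $k_{i_1}-1$ and $n-k_{i_2}$ crossings that would use a lost ray, plus $1$ exactly when $k_{i_1}>k_{i_2}$---is precisely that counting carried out in full, and it reproduces both stated branches. You are also right that the only genuinely drawing-dependent step is the slope ordering giving the criterion $q<p$ for $i_1<i_2$; note that this ordering (rather than its mirror image, which the words ``anticlockwise around $(0,1)$'' could also suggest) is the one realized in Figure \ref{fig:(M61&M62)} and is forced by the paper's subsequent use of the lemma, since the positivity of $2k^{t-1}_{i}-2k^{t-1}_{i+1}-1$ in the proof of Lemma \ref{lemma: sorted} holds only under this branch assignment.
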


For any mesh $M_{n,a}$ depending on $(k_1,k_2,\cdots,k_{n-2})$, let
$M'_{n,a}$ be the mesh depending on $(k'_1,k'_2,\cdots,k'_{n-2})$,
where $\{k'_1,k'_2,\cdots,k'_a\}=\{k_1,k_2,\cdots,$ $k_a\}$,
$\{k'_{a+1},k'_{a+2},\cdots,k'_{n-2}\}=\{k_{a+1},k_{a+2},$
$\cdots,k_{n-2}\}$ and $k'_i<k'_j$ for $1\leq i< j\leq a$ and
$a+1\leq i<j\leq n-2$ separately. Then we have the following lemmas.

\begin{lemma}\label{lemma: sorted}
For positive integers $n\geq 6$ and $0\leq a\leq n-2$,
$\nu_D(E(M_{n,a}))\leq\nu_D(E(M'_{n,a}))$.
\end{lemma}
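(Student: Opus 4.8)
The plan is to show that, within each half-plane, putting the index sequence into increasing order \emph{maximizes} the crossing count, so that the sorted mesh $M'_{n,a}$ realizes the maximum. First I would observe that in the drawing of $M_{n,a}$ every crossing occurs between two semi-line families lying in the \emph{same} half-plane: inside a single family $S_{l,i}$ the lines are parallel and do not cross, the left and right families are separated by the vertical axis, and the axis segments themselves carry no crossings. Hence the count splits as
$$\nu_D(E(M_{n,a})) = \sum_{1\le i_1<i_2\le a}\nu_D(S_{l,i_1},S_{l,i_2}) + \sum_{a+1\le i_1<i_2\le n-2}\nu_D(S_{r,i_1},S_{r,i_2}),$$
and since the two halves obey identical formulas it suffices to treat the left sum; the right sum is handled verbatim.

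Next I would set $C=\tfrac{1}{2}n^2-\tfrac{3}{2}n+1$ and rewrite Lemma \ref{lemma: sum<max} in the unified form, valid for $i_1<i_2$,
$$\nu_D(S_{l,i_1},S_{l,i_2}) = C + (k_{i_2}-k_{i_1}) + \big[k_{i_1}>k_{i_2}\big],$$
where $[\,\cdot\,]$ is the indicator equal to $1$ when the inequality holds and $0$ otherwise. I would then transform the sequence $(k_1,\dots,k_a)$ into its increasing rearrangement $(k'_1,\dots,k'_a)$ by a succession of adjacent transpositions, each removing exactly one inversion — the bubble-sort ordering that names the graph — and show that no such step decreases the left sum.

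The crux is the local computation for one inversion-removing swap. Suppose positions $p$ and $p+1$ carry values $\alpha=k_p>k_{p+1}=\beta$, and we interchange them. The contributions of all pairs $(i_1,i_2)$ disjoint from $\{p,p+1\}$ are untouched. For any third position $i$, the two pairs coupling $i$ to $\{p,p+1\}$ contribute a sum that is \emph{symmetric} in $\alpha$ and $\beta$: the terms $\pm k_i$ and the constant $C$ are unchanged, while the two indicator values merely exchange roles. Thus the swap leaves every such coupling invariant, and the only term that moves is the pair $(p,p+1)$ itself, whose value changes from $C+(\beta-\alpha)+1$ to $C+(\alpha-\beta)$, a net change of $2(\alpha-\beta)-1\ge 1>0$ since $\alpha>\beta$ are integers. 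I expect isolating this cancellation of the cross-terms to be the main obstacle to state cleanly; the rest is bookkeeping.

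Finally, summing over the sequence of adjacent transpositions that sorts $(k_1,\dots,k_a)$, and likewise $(k_{a+1},\dots,k_{n-2})$, each step weakly increases the corresponding half-sum. Adding the two halves then yields $\nu_D(E(M_{n,a}))\le\nu_D(E(M'_{n,a}))$, which is exactly the claimed bound.
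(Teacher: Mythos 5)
Your proof is correct and takes essentially the same route as the paper's: both sort each half-permutation by adjacent inversion-removing transpositions and use Lemma \ref{lemma: sum<max} to show each swap increases the count by $2(\alpha-\beta)-1>0$. The only difference is one of care, in your favor: you explicitly verify that the pairs coupling a third position to the swapped pair are symmetric in $\alpha$ and $\beta$ and hence invariant, a cancellation the paper uses but leaves implicit when it equates the total difference with the change in the single adjacent pair.
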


\begin{proof}
Let $\beta=\beta_l+\beta_r$, where $\beta_l$ (or $\beta_r$) is the
number of inversions in permutation section $(k_1,k_2,\cdots,k_{a})$
(or $(k_{a+1},k_{a+2},\cdots,k_{n-2})$). Let
$\beta'=\beta'_l+\beta'_r$, where $\beta'_l$ (or $\beta'_r$) is the
number of inversions in permutation section
$(k'_1,k'_2,\cdots,k'_{a})$ (or
$(k'_{a+1},k'_{a+2},\cdots,k'_{n-2})$). Then $\beta'=0$. If
$\beta=0$, then
$(k_1,k_2,\cdots,k_{n-2})=(k'_1,k'_2,\cdots,k'_{n-2})$, and
$\nu_D(E(M_{n,a}))=\nu_D(E(M'_{n,a}))$.

Now we consider the case for $\beta=m>0$. Let
$P^0=(k^0_1,k^0_2,\cdots,k^0_{n-2})=(k_1,k_2,\cdots,k_{n-2})$. For
$1\leq t\leq m$, suppose $(k^{t-1}_i,k^{t-1}_{i+1})$ be the first
adjacent inversion in permutation section
$(k^{t-1}_1,k^{t-1}_2,\cdots,k^{t-1}_{a})$ or
$(k^{t-1}_{a+1},k^{t-1}_{a+2},\cdots,$ $k^{t-1}_{n-2})$, let
$P^t=(k^t_1,k^t_2,\cdots,k^t_{n-2})$, where $k^{t}_i=k^{t-1}_{i+1}$,
$k^{t}_{i+1}=k^{t-1}_{i}$ and $k^{t}_j=k^{t-1}_j$ for $1\leq j\leq
i-1$ and $i+2\leq j\leq n-2$.

For $0\leq t\leq m$, Let $\beta^{t}=\beta^{t}_l+\beta^{t}_r$, where
$\beta^{t}_l$ (or $\beta^{t}_r$) is the number of inversions in
permutation section $(k^{t}_1,k^{t}_2,\cdots,k^{t}_{a})$ (or
$(k^{t}_{a+1},k^{t}_{a+2},\cdots,k^{t}_{n-2})$). Then
$\beta^{0}=\beta=m$, $\beta^{t}=\beta^{t-1}-1$ and
$\beta^{m}=0=\beta'$.

Let $M^t_{n,a}$ be the mesh depending on $P^t$. Then
$P^m=(k^m_1,k^m_2,\cdots,k^m_{n-2})$ $=(k'_1,k'_2,\cdots,k'_{n-2})$,
$M^0_{n,a}=M_{n,a}$ and $M^m_{n,a}=M'_{n,a}$. For $1\leq i\leq a$,
let $S^{t}_{l,i}=\{l_{ij}: j\in \{1,2,\cdots ,n\}-\{k^{t}_i\}\}$,
and for $a+1\leq i\leq n-2$ let $S^{t}_{r,i}=\{r_{ij}:j\in
\{1,2,\cdots ,n\}-\{k^{t}_i\}\}$.

By Lemma \ref{lemma: sum<max},
$$\begin{array}{llll}
&\quad\nu_D(E(M^t_{n,a}))-\nu_D(E(M^{t-1}_{n,a}))\\
&=\left
\{\begin{array}{llll}
\nu_{D}(S^{t}_{l,i},S^{t}_{l,i+1})-\nu_{D}(S^{t-1}_{l,i},S^{t-1}_{l,i+1}),\ \ \  if\  1\leq i\leq a-1\\
\nu_{D}(S^{t}_{r,i},S^{t}_{r,i+1})-\nu_{D}(S^{t-1}_{r,i},S^{t-1}_{r,i+1}),\
\  if\ a+1\leq i\leq n-3
              \end{array}
           \right .\\
&=\left(\frac{1}{2}n^2-\frac{3}{2}n+1+k^t_{i+1}-k^t_{i}\right)-\left(\frac{1}{2}n^2-\frac{3}{2}n+2+k^{t-1}_{i+1}-k^{t-1}_{i}\right)\\
&=k^t_{i+1}-k^t_{i}-k^{t-1}_{i+1}+k^{t-1}_{i}-1\\
&=2k^{t-1}_{i}-2k^{t-1}_{i+1}-1>0.
\end{array}$$
Hence $\nu_D(E(M^{t-1}_{n,a}))<\nu_D(E(M^t_{n,a}))$. It follows
$\nu_D(E(M_{n,a}))=\nu_D(E(M^0_{n,a}))$
$<\nu_D(E(M^1_{n,a}))<\cdots<\nu_D(E(M^m_{n,a}))=\nu_D(E(M'_{n,a}))$.
\end{proof}

\begin{lemma}\label{lemma: Xn<Max}
For integers $m\geq 4$ and $a\geq 1$,
$$\begin{array}{llll}
&\quad\nu_D(E(M_{n,a}))\\
&\leq\left \{\begin{array}{llll}
\frac{1}{8}n^4-\frac{25}{24}n^3+3n^2-\frac{23}{6}n+2, \ \ if\ \  n=2m,\ a=m-1,\\
\frac{1}{8}n^4-\frac{25}{24}n^3+\frac{7}{2}n^2-\frac{29}{6}n+2, \ \ if\ \  n=2m,\ a\in\{m-2,m\},\\
\frac{1}{8}n^4-\frac{25}{24}n^3+\frac{25}{8}n^2-\frac{95}{24}n+\frac{7}{4}, if\ \ n=2m-1,\ a\in\{m-2,m-1\}.\\
              \end{array}
           \right .\\
\end{array}
$$
\end{lemma}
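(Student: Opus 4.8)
The plan is to first apply Lemma \ref{lemma: sorted} to reduce to a sorted mesh, then convert the crossing count into a problem of partitioning $\{2,3,\dots,n-1\}$ so as to maximize a ``spread'' quantity, and finally to evaluate that maximum for the three prescribed ranges of $a$.

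First I would invoke Lemma \ref{lemma: sorted}, which gives $\nu_D(E(M_{n,a}))\le \nu_D(E(M'_{n,a}))$ for every permutation $P$, so it suffices to bound the sorted mesh. Since the left semi-lines lie in the left half-plane and the right ones in the right half-plane, no left group crosses a right group, and the vertical axis meets each semi-line only at an endpoint; hence $\nu_D(E(M'_{n,a}))$ is exactly the sum of $\nu_D(S_{l,i_1},S_{l,i_2})$ over $1\le i_1<i_2\le a$ together with $\nu_D(S_{r,i_1},S_{r,i_2})$ over $a+1\le i_1<i_2\le n-2$. In a sorted mesh every such pair satisfies $k_{i_1}<k_{i_2}$, so by Lemma \ref{lemma: sum<max} each contributes $\binom{n}{2}-(n-k_{i_2})-(k_{i_1}-1)=\binom{n-1}{2}+(k_{i_2}-k_{i_1})$.

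Summing, and writing $L,R$ for the left and right value-sets and $\sigma(S)=\sum_{x<y,\ x,y\in S}(y-x)$, I obtain
$$\nu_D(E(M'_{n,a}))=\binom{n-1}{2}\left[\binom{a}{2}+\binom{n-2-a}{2}\right]+\sigma(L)+\sigma(R).$$
The bracketed term depends only on $a$, so the task becomes: partition $\{2,\dots,n-1\}$ into $L$ (size $a$) and $R$ (size $n-2-a$) so as to maximize $\sigma(L)+\sigma(R)$. Putting $\tau(L,R)=\sum_{x\in L,\,y\in R}|x-y|$ and using $\sigma(L)+\sigma(R)+\tau(L,R)=\sigma(\{2,\dots,n-1\})=\binom{n-1}{3}$, which is constant, maximizing the spread is equivalent to minimizing the cross term $\tau(L,R)$.

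The hard part will be proving that $\tau(L,R)$ is minimized by the \emph{interleaved} partition, i.e.\ by colouring $2,3,\dots,n-1$ as evenly as the sizes allow. I would argue this by a local exchange: writing $\tau=\sum_{k}(\ell_L r_R+\ell_R r_L)$ over the $n-3$ gaps between consecutive values (where $\ell_\bullet,r_\bullet$ count each colour to the left/right of the gap), each gap term is a convex function of the prefix $L$-count, and an adjacent-transposition exchange shows any non-interleaved colouring can be improved until the colours alternate as evenly as possible. Granting this, in the balanced case $n=2m,\ a=m-1$ both classes become arithmetic progressions of step $2$ and length $m-1$, so $\sigma(L)+\sigma(R)=4\binom{m}{3}$; substituting into the displayed formula and simplifying reproduces exactly $\tfrac18n^4-\tfrac{25}{24}n^3+3n^2-\tfrac{23}{6}n+2$. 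The two remaining cases ($n=2m$ with unequal sizes $m,m-2$, and $n=2m-1$ with sizes $m-1,m-2$) are treated identically, the optimal interleaving placing the extra element(s) so as to keep $\tau$ least; these reduce to routine polynomial computations yielding the other two bounds. I expect the combinatorial optimality of interleaving to be the only genuinely nontrivial step, the rest being the mechanical reduction, the closed form above, and the per-case arithmetic.
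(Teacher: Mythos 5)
Your reduction coincides with the paper's up through the closed form: both invoke Lemma \ref{lemma: sorted}, use the fact that left and right groups cannot cross each other or the axis, and apply Lemma \ref{lemma: sum<max} to obtain $\nu_D(E(M'_{n,a}))=\binom{n-1}{2}\bigl[\binom{a}{2}+\binom{n-2-a}{2}\bigr]+\sigma(L)+\sigma(R)$. After that you genuinely diverge: the paper rewrites $\sigma(L)+\sigma(R)$ as a linear functional $f(P)=\sum_i\theta(i)k_i$ of the sorted permutation and proves $f(P')\le f(P'')$ by an explicit, constructive sequence of $n-3$ swaps that install the values $n-1,n-2,\dots$ one at a time into their positions in the target interleaved permutation $P''$, checking $\theta(i_t)\le\theta(j_t)$ at each step; you instead pass to the complement via $\sigma(L)+\sigma(R)=\binom{n-1}{3}-\tau(L,R)$ and aim to show that the interleaved partition minimizes the cross-sum $\tau$. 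That identity, the gap decomposition $\tau=\sum_p\phi_p(u_p)$ with $\phi_p(u)=2u^2+u\bigl(n-2-2p-2a\bigr)+pa$ (where $u_p$ is the prefix count of $L$), and your arithmetic for the balanced case ($\sigma(L)+\sigma(R)=4\binom{m}{3}$, yielding the stated quartic) are all correct.

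The gap is in your proof of the key step. It is \emph{not} true that every non-interleaved colouring admits an improving adjacent transposition: for $n=8$, $a=3$ (so $m=4$, within the lemma's range), the partition $L=\{2,5,6\}$, $R=\{3,4,7\}$ of $\{2,\dots,7\}$, with pattern LRRLLR, has $\tau=19$, exactly the interleaved value, and all three admissible adjacent swaps leave $\tau$ unchanged at $19$; hence local search can stall at non-interleaved configurations, and coordinate-wise local optimality over this path-constrained set does not by itself imply global optimality (the paper sidesteps this entirely by exhibiting a one-way chain of swaps from an arbitrary sorted permutation to $P''$, rather than running a local improvement process). Fortunately, your own convexity observation yields a complete proof with no exchange at all: the real minimizer of $\phi_p$ is $\frac{p}{2}+\frac{2a-(n-2)}{4}$, and in every case of the lemma $|2a-(n-2)|\le 2$, so one can choose an interleaved path (with the surplus elements placed at the appropriate ends, matching the paper's $P''$) whose prefix counts $u_p$ lie within $\frac12$ of every real minimizer and therefore attain the integer minimum of each strictly convex $\phi_p$ simultaneously; the interleaved partition then minimizes $\tau$ termwise. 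With that replacement your argument is complete, modulo the per-case polynomial computations, which you --- like the paper, which only treats $(n,a)=(2m,m)$ in detail --- defer as routine.
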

\begin{proof}
Firstly, we give a proper permutation, called
$(k''_1,k''_2,\cdots,k''_{n-2})$, depending on which the mesh
structure is denoted by $M''_{n,a}$.
$$
\begin{array}{llll}
&(k''_1,k''_2,k''_3,\cdots,k''_{a})(k''_{a+1},k''_{a+2},k''_{a+3},\cdots,k''_{n-2}) \\
=& \left \{\begin{array}{llll}
(2,4,6,\cdots,n-2)(3,5,7,\cdots,n-1),& if\ \ n=2m,\ a=m-1,\\
(2,3,5,7,\cdots,n-1)(4,6,8,\cdots,n-2),& if\ \ n=2m,\ a=m,\\
(4,6,8,\cdots,n-2)(2,3,5,7,\cdots,n-1),& if\ \ n=2m,\ a=m-2,\\
(2,4,6,\cdots,n-1)(3,5,7,\cdots,n-2),& if\ \ n=2m-1,\ a=m-1,\\
(3,5,7,\cdots,n-2)(2,4,6,\cdots,n-1),& if\ \ n=2m-1,\ a=m-2.\\
              \end{array}
           \right .\\
\end{array}
$$

Then the permutation $(k''_1,k''_2,\cdots,k''_{n-2})$ satisfies
$k''_i<k''_j$ for $1\leq i< j\leq a$ and $a+1\leq i<j\leq n-2$
separately.

We will prove Lemma \ref{lemma: Xn<Max} holds for $(n,a)=(2m,m)$. By
a similar argument, we can  prove Lemma \ref{lemma: Xn<Max} holds
for the other cases of $(n,a)$ and we leave it to readers.

By Lemmas \ref{lemma: sum<max} and \ref{lemma: sorted},
$$\begin{array}{llll}
&\ \ \ \nu_D(E(M_{n,a}))\\&\leq\nu_D(E(M'_{n,a}))\\
&=\sum\limits_{1\leq
i< j\leq a}\nu_{D}({S'_{l,i}},S'_{l,j})+\sum\limits_{a+1\leq i<j\leq n-2}\nu_{D}({S'_{r,i}},S'_{r,j})\\
&=\left(\frac{1}{2}n^2-\frac{3}{2}n+1\right)\left[{a\choose2}+{n-2-a\choose2}\right]+\sum\limits_{1\leq
i< j\leq a}(k'_j-k'_i)+\sum\limits_{a+1\leq i<j\leq n-2}(k'_j-k'_i)\\
\end{array}$$$$\begin{array}{llll}&=\frac{1}{8}(n^4-9n^3+32n^2-48n+24)+\sum\limits^{\frac{n}{2}}_{i=1}(2i-1-\frac{n}{2})k'_i+\sum\limits^{\frac{n}{2}-2}_{i=1}(2i+1-\frac{n}{2})k'_{i+a}\\
&=\frac{1}{8}(n^4-9n^3+32n^2-48n+24)+\sum\limits^{\frac{n}{2}}_{i=1}(2i-1)k'_i+\sum\limits^{\frac{n}{2}-2}_{i=1}(2i+1)k'_{i+a}-\sum\limits^{n-2}_{i=1}\frac{n}{2}k'_{i}\\
&=\frac{1}{8}(n^4-11n^3+34n^2-44n+24)+\sum\limits^{\frac{n}{2}}_{i=1}(2i-1)k'_i+\sum\limits^{\frac{n}{2}-2}_{i=1}(2i+1)k'_{i+a}.
\end{array}$$

Now we will prove
$\sum^{\frac{n}{2}}_{i=1}(2i-1)k'_i+\sum^{\frac{n}{2}-2}_{i=1}(2i+1)k'_{i+a}\leq
\sum^{\frac{n}{2}}_{i=1}(2i-1)k''_i+\sum^{\frac{n}{2}-2}_{i=1}(2i+1)k''_{i+a}$.
We define
$$\theta(i)=\left\{\begin{array}{llll}
2i-1 &if\ \  1\leq i\leq a,\\
2(i-a)+1 &if\ \  a+1\leq i\leq n-2.
\end{array}\right.$$

Let
$P_{0}=(k^{0}_1,k^{0}_2,\cdots,k^{0}_{n-2})=(k'_1,k'_2,\cdots,k'_{n-2})$.
For $1\leq t\leq n-3$, suppose $k^{t-1}_{i_t}=n-t$, let
$j_t=\frac{n}{2}-\frac{t-1}{2}$ for odd $t$, $j_t=n-\frac{t}{2}-1$
for even $t$ and $P_t=(k^t_1,k^t_2,\cdots,k^t_{n-2})$ where
$k^t_{j_t}=k^{t-1}_{i_t}$, $k^t_{i_t}=k^{t-1}_{j_t}$ and
$k^t_s=k^{t-1}_s$ for $1\leq s\leq n-2, s\neq i_t$ and $s\neq j_t$.
Then, for even $t$
\begin{align}
&(k^{t-1}_{\frac{n}{2}-\frac{t}{2}+1},k^{t-1}_{\frac{n}{2}-\frac{t}{2}+2},\cdots,
k^{t-1}_{\frac{n}{2}})(k^{t-1}_{n-\frac{t}{2}},k^{t-1}_{n-\frac{t}{2}+1},\cdots,
k^{t-1}_{n-2})\notag
\\=&(n-t+1,n-t+3,\cdots,n-1)(n-t+2,n-t+4,\cdots,n-2),
\end{align}
and for odd $t$
\begin{align}
&(k^{t-1}_{\frac{n}{2}-\frac{t-1}{2}+1},k^{t-1}_{\frac{n}{2}-\frac{t-1}{2}+2},\cdots,
k^{t-1}_{\frac{n}{2}})(k^{t-1}_{n-\frac{t+1}{2}},k^{t-1}_{n-\frac{t+1}{2}+1},\cdots,
k^{t-1}_{n-2})\notag \\
=&(n-t+2,n-t+4,\cdots,n-1)(n-t+1,n-t+3,\cdots,n-2).
\end{align}
Furthermore, we have $k^t_{i_t}=k^{t-1}_{j_t}\leq n-t=k^{t-1}_{i_t}=
k^t_{j_t}$ and $P_{n-3}=(k''_1,k''_2,\cdots,k''_{n-2})$.

There are two cases depending on $t$.

Case 1. $t$ is even. Then $t-1$ is odd. By expression $(1)$, $1\leq
i_t\leq \frac{n}{2}-\frac{t}{2}+1-1=\frac{n-t}{2}$ or $a+1\leq
i_t\leq n-\frac{t}{2}-1$. Hence,
$$\begin{array}{llll}
\theta(i_t)&\leq\left\{\begin{array}{llll}2\cdot\frac{n-t}{2}-1, &\
\
if\ \ 1\leq i_t\leq\frac{n-t}{2}\\
2\cdot(n-\frac{t}{2}-1-a)+1, &\ \ if\ \ a+1\leq
i_t\leq n-\frac{t}{2}-1\\
\end{array}\right.\\
&= n-t-1\\
&=2(n-\frac{t}{2}-1-a)+1\\
&=\theta(j_t)
\end{array}
$$

Case 2. $t$ is odd. Then $t-1$ is even. By expression $(2)$, $1\leq
i_t\leq \frac{n}{2}-\frac{t-1}{2}+1-1=\frac{n}{2}-\frac{t-1}{2}$ or
$a+1\leq i_t\leq n-\frac{t+1}{2}-1=n-\frac{t+3}{2}$. Hence,
$$\begin{array}{llll}
\theta(i_t)&\leq\left\{\begin{array}{llll}2\cdot(\frac{n}{2}-\frac{t-1}{2})-1,
&\ \
if\ \ 1\leq i_t\leq\frac{n}{2}-\frac{t-1}{2}\\
2\cdot(n-\frac{t+3}{2}-a)+1, &\ \ if\ \ a+1\leq
i_t\leq n-\frac{t+3}{2}\\
\end{array}\right.\\
&\leq n-t\\
&=2(\frac{n}{2}-\frac{t-1}{2})-1\\
&=\theta(j_t)
\end{array}
$$
So $\theta(i_t)\leq\theta(j_t)$ for all $1\leq t\leq n-3$.

For $0\leq t\leq n-3$, let
$f(P_t)=\sum^{\frac{n}{2}}_{i=1}(2i-1)k^t_i+\sum^{\frac{n}{2}-2}_{i=1}(2i+1)k^t_{i+a}$.
Then for $1\leq t\leq n-3$,
$$\begin{array}{llll}
f(P_t)-f(P_{t-1})&=\theta(i_t)k^t_{i_t}+\theta(j_t)k^t_{j_t}-(\theta(i_t)k^{t-1}_{i_t}+\theta(j_t)k^{t-1}_{j_t})\\
&=\theta(i_t)(k^t_{i_t}-k^{t-1}_{i_t})+\theta(j_t)(k^t_{j_t}-k^{t-1}_{j_t})\\
&=(\theta(i_t)-\theta(j_t))(k^t_{i_t}-k^{t}_{j_t})\geq0.
\end{array}$$
So there is $f(P_{t-1})\leq f(P_t)$. It follows $f(P_{0})\leq
f(P_{1})\leq\cdots\leq f(P_{n-3})$. Hence,
$$\begin{array}{llll}
&\ \ \ \nu_D(E(M_{n,a}))\\
&\leq\frac{1}{8}(n^4-11n^3+34n^2-44n+24)+\sum\limits^{\frac{n}{2}}_{i=1}(2i-1)k'_i+\sum\limits^{\frac{n}{2}-2}_{i=1}(2i+1)k'_{i+a}\\
&=\frac{1}{8}(n^4-11n^3+34n^2-44n+24)+f(P_{0})\\
&\leq\frac{1}{8}(n^4-11n^3+34n^2-44n+24)+f(P_{n-3})\\
&=\frac{1}{8}(n^4-11n^3+34n^2-44n+24)+\sum\limits^{\frac{n}{2}}_{i=1}(2i-1)k''_i+\sum\limits^{\frac{n}{2}-2}_{i=1}(2i+1)k''_{i+a}\\
&=\frac{1}{8}n^4-\frac{25}{24}n^3+\frac{7}{2}n^2-\frac{29}{6}n+2.
\end{array}$$\end{proof}
\section{Upper bound for $cr(B_n)$}
\indent \indent In Figure \ref{fig: B2B3B4}, \ref{fig: B5} and
\ref{fig: B6}, we show a drawing $D_2$ ($D_3,D_4,D_5,D_6$) of $B_2$
($B_3,B_4,B_5,B_6$) containing 0 (0, 0, 120, 5196) crossings. So we
can obtain the following theorem:
\begin{figure}[h]
\centering\includegraphics[scale=1]{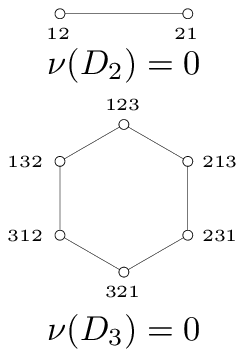}
\centering\includegraphics[scale=1]{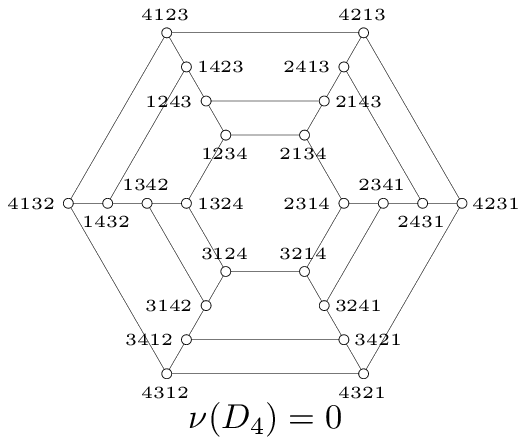}\caption{\small{Some
drawings of $B_2$, $B_3$ and $B_4$}}\label{fig: B2B3B4}
\end{figure}
\begin{figure}[h]
\centering\includegraphics[scale=0.7]{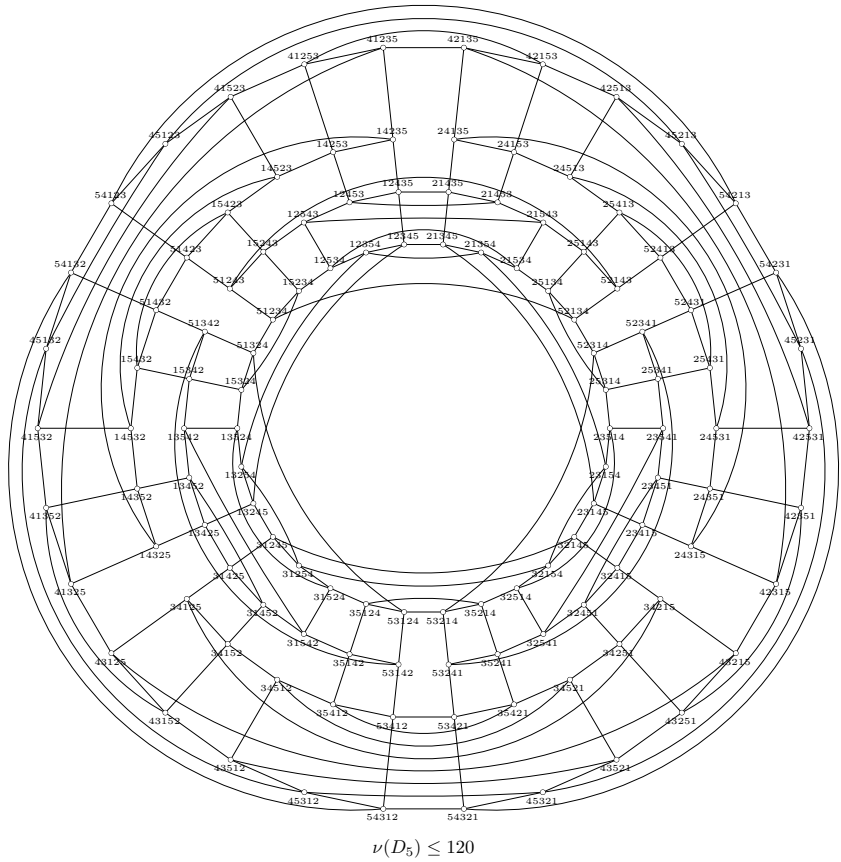}\caption{\small{A
drawing of $B_5$}}\label{fig: B5}
\end{figure}
\begin{figure}[h]
\centering\includegraphics[scale=0.39]{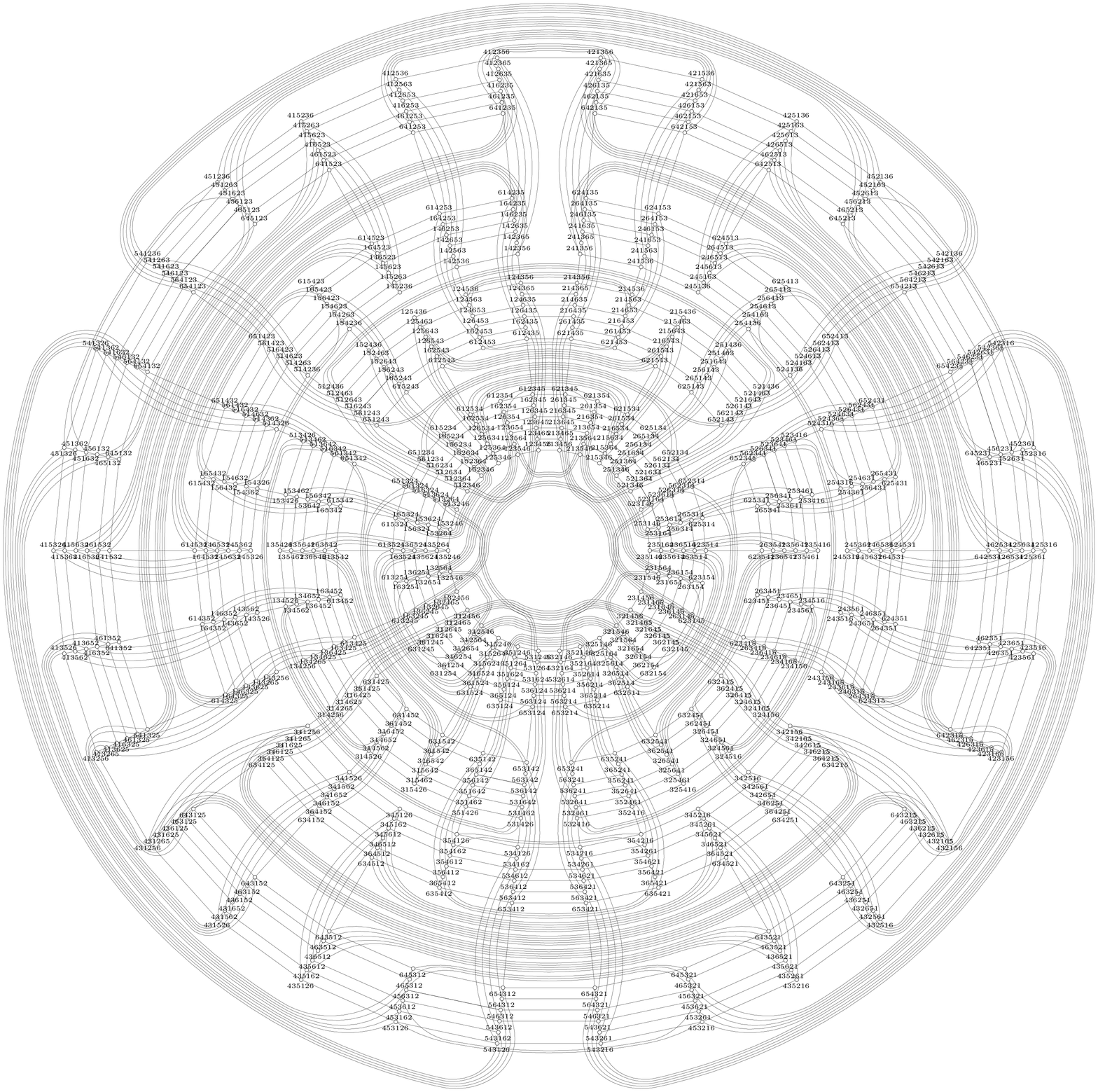}
\centering{\small{$\nu(D_6)\leq 5196$}} \caption{\small{A drawing of
$B_6$}}\label{fig: B6}
\end{figure}

\begin{theorem}\label{theorem: Upper Bound for n=5,6}
$cr(B_2)=cr(B_3)=cr(B_4)=0$, $cr(B_5)\le 120$ and $cr(B_6)\le 5196$.
\end{theorem}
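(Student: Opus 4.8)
The plan is to establish every assertion by exhibiting an explicit good drawing and invoking the inequality $cr(G)\le\nu(D)$ recorded in Section 2, which holds for any good drawing $D$ of $G$. Since the crossing number is never negative, for the three equalities it suffices to produce \emph{planar} drawings of $B_2$, $B_3$, and $B_4$ (so that $0\le cr\le\nu(D)=0$), and for the remaining two statements to produce drawings of $B_5$ and $B_6$ whose crossings number exactly $120$ and $5196$. The whole theorem is thus reduced to a construction-plus-count, with the small cases settled structurally.

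First I would dispose of the small cases. Because $B_n$ is $(n-1)$-regular on $n!$ vertices, $B_2$ is a single edge and $B_3$ is a connected $2$-regular graph on $6$ vertices, i.e. a $6$-cycle; both are manifestly planar. For $B_4$, a $3$-regular graph on $24$ vertices with $36$ edges, I would exhibit the planar embedding $D_4$ of Figure \ref{fig: B2B3B4}; in fact $B_n$ is the $1$-skeleton of the $(n-1)$-dimensional permutohedron, and for $n=4$ this is the three-dimensional truncated octahedron, so planarity also follows from Steinitz's theorem. Hence $cr(B_2)=cr(B_3)=cr(B_4)=0$.

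For $B_5$ and $B_6$ the substance lies in the drawings $D_5$ and $D_6$ of Figures \ref{fig: B5} and \ref{fig: B6} together with an organised crossing count. I would structure the count using the recursive decomposition of $B_n$: fixing the symbol in the last coordinate partitions $V(B_n)$ into $n$ blocks, each inducing a copy of $B_{n-1}$ (the swaps of positions $1,\dots,n-2$ acting on the first $n-1$ entries), while the swap of positions $n-1$ and $n$ supplies a perfect matching joining the blocks, with $(n-2)!$ edges between each ordered pair. For $n=6$ this gives $6$ copies of $B_5$ carrying $1440$ edges and $15\cdot 24=360$ matching edges, totalling the $1800$ edges of $B_6$. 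Placing the $n$ copies of a drawing of $B_{n-1}$ symmetrically in the plane and routing the matching, I would write $A$ for the matching and $B$ for the union of the copy edges and split $\nu(D_6)=\nu_D(B)+\nu_D(A)+\nu_D(A,B)$ by the additive identity $\nu_D(A\cup B)=\nu_D(A)+\nu_D(B)+\nu_D(A,B)$ from Section 2, so that the crossings inside the copies, among matching edges, and between matching and copy edges are accounted for separately.

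The main obstacle is the exact verification of $\nu(D_6)=5196$ (and, far more easily, $\nu(D_5)=120$): with $720$ vertices and $1800$ edges a naive pair-by-pair count is hopeless, so one must exploit the symmetry of the placement. I would fix the cyclic/reflective symmetry used to arrange the six sectors, count the crossings within one sector and between one ordered pair of sectors, and multiply by the appropriate orbit sizes. The fans of matching edges emanating near the $n-1$ directions at each vertex are precisely the situation modelled by the mesh $M_{n,a}$, which is defined for $n\ge 6$; I would therefore reduce each sector-to-sector tally to an instance of the counting formulas of Lemmas \ref{lemma: sum<max}--\ref{lemma: Xn<Max}, summing the contributions and checking that the total is $5196$. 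The analogous, much smaller bookkeeping for $D_5$ yields $cr(B_5)\le 120$, and the two explicit planar drawings close out the remaining equalities.
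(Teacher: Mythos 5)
Your proposal is correct and takes essentially the same approach as the paper: the paper's entire proof consists of exhibiting the drawings of Figures \ref{fig: B2B3B4}, \ref{fig: B5} and \ref{fig: B6} with $0,0,0,120,5196$ crossings respectively and invoking $cr(G)\le\nu(D)$ together with the nonnegativity of the crossing number. Your extra structural justifications (that $B_3$ is a $6$-cycle, that $B_4$ is the skeleton of the truncated octahedron and hence planar by Steinitz's theorem, and the sector decomposition organizing the crossing count for $D_5$ and $D_6$) are refinements of what the paper leaves implicit in its figures.
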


Our main result is Theorem \ref{theorem: main result}.
\begin{theorem}\label{theorem: main result}
For $n\geq7$,
$$\begin{array}{llll}
cr(B_n)\leq&((n-1)!)^2\left[\frac{127}{300}+\frac{5}{24(n-4)!}+\sum\limits^{n-5}_{i=3}\frac{1}{3i!}\right.\\
&\left.+\sum\limits^{n}_{i=2j,j\geq4}\left(\frac{1}{8(i-3)!}-\frac{1}{4(i-2)!}+\frac{1}{4(i-1)!(i-1)}\right)\right].
\end{array}$$
\end{theorem}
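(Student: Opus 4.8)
The plan is to build a drawing $D_n$ of $B_n$ by induction on $n$ and to charge the crossings created at each step to the meshes of Section~2. Recall that fixing the symbol in the last coordinate partitions $V(B_n)$ into $n$ classes, the class $x_n=k$ spanning a subgraph isomorphic to $B_{n-1}$; hence $B_n$ is the union of $n$ copies $B^{(1)}_{n-1},\dots,B^{(n)}_{n-1}$ together with the edges transposing coordinates $n-1$ and $n$. These latter edges split, for each unordered pair $\{k,l\}$, into a matching of $(n-2)!$ edges joining $B^{(k)}_{n-1}$ to $B^{(l)}_{n-1}$, so there are $\binom{n}{2}(n-2)!=\tfrac{n!}{2}$ connecting edges. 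I would draw each copy as a small congruent replica of $D_{n-1}$, place the $n$ replicas near the points $(0,1),\dots,(0,n)$ of the vertical axis used to define $M_{n,a}$, and route the connecting edges leaving each replica as the non-vertical semi-lines of a mesh $M_{n,a}$, taking $a$ to be a balanced split (so that the cases $n=2m,\ a\in\{m-2,m-1,m\}$ and $n=2m-1,\ a\in\{m-2,m-1\}$ of Lemma~\ref{lemma: Xn<Max} apply); the ``lost'' line at $(0,k)$ encodes that a copy has no edge to itself.

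Every crossing of $D_n$ is then of one of two kinds: a crossing internal to a single replica, of which there are $\nu(D_{n-1})$ per replica and $n\,\nu(D_{n-1})$ in all, or a crossing in which a connecting edge takes part. For the second kind I would show that the construction places every such crossing inside one of the meshes attached to the docking points, so that the total number of crossings of the second kind equals a sum of mesh-edge-crossing counts $\nu_D(E(M_{n,a}))$. This yields a recurrence
\[
\nu(D_n)=n\,\nu(D_{n-1})+A_n,
\]
where $A_n$ is that sum. Lemma~\ref{lemma: sum<max} evaluates the crossings between two parallel groups of semi-lines, Lemma~\ref{lemma: sorted} shows that re-sorting a mesh only increases its crossings (so the sorted mesh $M'_{n,a}$ is the worst case and gives a valid upper bound), and Lemma~\ref{lemma: Xn<Max} supplies the explicit polynomial bound on $\nu_D(E(M'_{n,a}))$ in each parity/split case. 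Substituting these into $A_n$ gives an explicit upper estimate for $A_n$ that carries a squared-factorial factor coming from the sizes of the edge bundles routed through each mesh; the even-$n$ versus odd-$n$ alternatives of Lemma~\ref{lemma: Xn<Max} are the source of the separate summation over even indices $i=2j$ in the statement.

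It then remains to unwind the recurrence. Iterating $\nu(D_n)=n\,\nu(D_{n-1})+A_n$ down to the base dimension gives
\[
\nu(D_n)=\frac{n!}{6!}\,\nu(D_6)+\sum_{k=7}^{n}\frac{n!}{k!}\,A_k,
\]
anchored by the small cases $\nu(D_5)\le120$ and $\nu(D_6)\le5196$ of Theorem~\ref{theorem: Upper Bound for n=5,6}. Since the bound on each $A_k$ is a factorial-squared quantity, the common factor $((n-1)!)^2$ can be extracted from $\tfrac{n!}{k!}A_k$, and, after reindexing, the residual factorial sums split exactly into the displayed pieces: the convergent part collapsing to the constant $\tfrac{127}{300}$, the isolated term $\tfrac{5}{24(n-4)!}$, the tail $\sum_{i=3}^{n-5}\tfrac{1}{3\,i!}$, and the even-index correction $\sum_{i=2j,\,j\ge4}^{n}\bigl(\tfrac{1}{8(i-3)!}-\tfrac{1}{4(i-2)!}+\tfrac{1}{4(i-1)!(i-1)}\bigr)$. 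As $cr(B_n)\le\nu(D_n)$, this proves Theorem~\ref{theorem: main result}.

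The genuine difficulty is the second kind of crossing. One must specify the geometric layout precisely enough that the connecting edges, once drawn, cross one another and the replica boundaries no more often than the meshes predict, i.e.\ that the charging of crossings to meshes is valid and does not undercount. Getting this right is exactly what forces the balanced choice of $a$ that makes Lemma~\ref{lemma: Xn<Max} applicable and fixes the bundle sizes entering $A_n$; once the charging is established, the evaluation of $A_n$ and the solution of the recurrence are mechanical but heavy in bookkeeping, which is why the final coefficients (the $\tfrac{127}{300}$ and the even-index terms) appear in such an elaborate form.
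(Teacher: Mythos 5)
Your construction is not the one in the paper, and it cannot yield the stated bound. The paper never places $n$ disjoint replicas of $D_{n-1}$ along an axis: it works with the subgraph $B'_n$ (one sixth of $B_n$, via Observation~\ref{Observation relation}) and builds $D'_{n+1}$ from $D'_n$ by \emph{expanding each vertex} $v$ into the path $v^1v^2\cdots v^{n+1}$, replacing every drawn edge by a bundle of $n$ parallel edges; the meshes $M_{n+1,a}$ are \emph{local} structures at each expanded vertex (the ``lost'' semi-line being the one missing edge of a bundle), with $a$ controlled by the balance Lemmas~\ref{lemma: |l-s|<=2} and \ref{lemma: numbers in odd}. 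Since every old crossing turns into $n^2$ new ones, the recurrence is quadratic, $\nu(D'_{n+1})\le n^2\nu(D'_n)+O(n!\,n^4)$ (Lemmas~\ref{lemma: n>=7 odd} and \ref{lemma: n>=7 even}), and solving it is what produces the factor $\bigl(\frac{(n-1)!}{5!}\bigr)^2\nu(D_6)=\frac{433}{1200}((n-1)!)^2$ and the convergent sums of Lemma~\ref{lemma: n>=7}. The $((n-1)!)^2$ scale of the theorem is precisely the signature of that $n^2$ multiplier; your linear recurrence $\nu(D_n)=n\,\nu(D_{n-1})+A_n$ can never reproduce it, since its homogeneous part grows only like $n!$.

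The fatal quantitative gap is in $A_n$. In your layout the connecting edges between the replicas at $(0,k)$ and $(0,l)$ form, for each pair $\{k,l\}$, a bundle of $(n-2)!$ edges, so each crossing of two mesh semi-lines costs $((n-2)!)^2$ actual crossings; with Lemma~\ref{lemma: Xn<Max} giving $\Theta(n^4)$ semi-line crossings, your $A_n$ is $\Theta\bigl(n^4((n-2)!)^2\bigr)=\Theta\bigl(n^2((n-1)!)^2\bigr)$. The theorem's right-hand side is at most $C\cdot((n-1)!)^2$ for an absolute constant $C$ (every bracketed sum converges), so your estimate overshoots by a factor of order $n^2$ already at the single term $k=n$ of your unwound sum $\sum_{k=7}^{n}\frac{n!}{k!}A_k$; the assertion that this sum ``splits exactly into the displayed pieces'' is therefore false, not deferred bookkeeping. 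Nor can this be repaired within your framework: with the replicas confined to disjoint disks, contracting the disks turns the $\frac{n!}{2}$ connecting edges into a drawing of $K_n$ in which every edge has multiplicity $(n-2)!$, and selecting one edge per bundle at random shows these edges alone force at least $cr(K_n)\cdot((n-2)!)^2=\Omega\bigl(n^2((n-1)!)^2\bigr)$ crossings. The paper's vertex-expansion evades exactly this obstruction by interleaving all the sub-copies along the axes instead of segregating them into disjoint regions.
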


To prove Theorem \ref{theorem: main result}, we first need some
definitions and observations.

\begin{definition} For any vertex $v\in V(B_n)$, we denote
$v^{1}$, $v^{2},\cdots,$ $v^{n+1}$ as $n+1$ vertices of $B_{n+1}$
satisfying  $v^{i}=v_1v_2\cdots v_{i-1}(n+1)v_{i}\cdots v_n$ $(1\leq
i\leq n+1)$.
\end{definition}

\begin{observation} For any vertex $v\in V(B_n)$  and $v^{i}\in
V(B_{n+1})$ $ (1\leq i\leq n+1)$, there are $n$ edges \{$v^{1}v^{2},
v^{2}v^{3},\cdots,$ $v^{n}v^{n+1}$\} resulting in a path with $n+1$
vertices.
\end{observation}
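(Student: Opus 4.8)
The plan is to verify the claim by a direct coordinate comparison, reducing everything to the defining adjacency rule of $B_{n+1}$. Recall that two permutations of $\{1,2,\cdots,n+1\}$ are adjacent in $B_{n+1}$ exactly when one is obtained from the other by transposing the symbols occupying two consecutive positions. Thus the whole statement reduces to two things: that $v^{i+1}$ differs from $v^{i}$ by a single adjacent transposition for each $1\leq i\leq n$, and that the $n+1$ words $v^1,v^2,\cdots,v^{n+1}$ are pairwise distinct.

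First I would fix $i$ with $1\leq i\leq n$ and write the two words out from the definition: $v^{i}=v_1\cdots v_{i-1}(n+1)v_iv_{i+1}\cdots v_n$ places the symbol $n+1$ in position $i$, while $v^{i+1}=v_1\cdots v_{i-1}v_i(n+1)v_{i+1}\cdots v_n$ places it in position $i+1$. Comparing coordinate by coordinate, the leading block $v_1\cdots v_{i-1}$ and the trailing block $v_{i+1}\cdots v_n$ agree in both words, position $i$ holds $n+1$ in $v^{i}$ but $v_i$ in $v^{i+1}$, and position $i+1$ holds $v_i$ in $v^{i}$ but $n+1$ in $v^{i+1}$. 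Hence $v^{i}$ and $v^{i+1}$ coincide in every coordinate except the neighbouring pair $i,i+1$, whose entries are interchanged; taking the swap index to be $i$ in the adjacency definition shows $v^{i}v^{i+1}\in E(B_{n+1})$, which yields all $n$ edges of the list.

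It then remains to observe that the symbol $n+1$ occupies position $i$ in $v^{i}$, so $v^1,\cdots,v^{n+1}$ place this symbol in $n+1$ distinct positions and are therefore pairwise distinct vertices of $B_{n+1}$. Together with the $n$ consecutive edges established above, these distinct vertices trace out a path on $n+1$ vertices, as asserted. I anticipate no genuine obstacle in this argument; the only point demanding attention is the index bookkeeping, since inserting $n+1$ shifts the trailing block by one, and one must confirm that $v^{i}$ and $v^{i+1}$ differ in the adjacent positions $i$ and $i+1$ required by the adjacency rule rather than in some non-consecutive pair.
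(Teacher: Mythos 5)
Your proof is correct and is exactly the argument the paper has in mind: the paper states this as an Observation without proof, precisely because the adjacency $v^{i}v^{i+1}\in E(B_{n+1})$ follows immediately from the insertion definition of $v^{i}$, as your coordinate-by-coordinate comparison spells out. Your added check that the $v^{i}$ are pairwise distinct (the symbol $n+1$ sits in $n+1$ different positions) is a sound piece of bookkeeping that the paper leaves implicit.
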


\begin{definition} For $n\geq6$, let $B'_n$ be a subgraph of $B_n$ satisfying :
$(1)$ $V(B'_n)$ is a subset of $V(B_n)$ and $V(B'_n)$=$\{v:\mbox{the
label of }v\mbox{ contains }(1,2,3,$ $4),(1,2,4,3),(1,4,2,3)\mbox{
or }(4,1,2,3)$ as a subpermutation$\}$ $(2)$ $uv \in E(B'_n)$,
\textbf{iff} $uv \in E(B_n)$ and at least one of u and v is in
$V(B'_n)$. A drawing of $B'_n$ is denoted by $D'_n$.
\end{definition}

Let $D_n$ be a drawing of $B_n$ containing six $D'_n$, and we have
\begin{observation}\label{Observation relation}
By symmetry $($see Figure $\ref{fig: B6}$ and $\ref{fig: D'6})$,
$\nu(D_n)=6*\nu(D'_n)$.
\end{observation}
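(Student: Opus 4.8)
The plan is to exhibit an order-six group of automorphisms of $B_n$ that is realised as a symmetry of the drawing $D_n$, and then to read the equality $\nu(D_n)=6\nu(D'_n)$ off that symmetry. First I would set up the combinatorics behind the word ``symmetry.'' Observe that a vertex $v=v_1v_2\cdots v_n$ lies in $V(B'_n)$ precisely when the symbols $1,2,3$ occur in increasing relative order: the four patterns $(1,2,3,4),(1,2,4,3),(1,4,2,3),(4,1,2,3)$ appearing in the definition of $B'_n$ are exactly the four ways of inserting $4$ into the increasing word $123$, so all of them leave $1,2,3$ in increasing relative order. For each $\sigma\in S_3$ let $\Phi_\sigma\colon B_n\to B_n$ relabel the symbols by $\sigma$ on $\{1,2,3\}$ and fix $\{4,5,\dots,n\}$. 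Since adjacency in $B_n$ depends only on swapping two adjacent \emph{positions}, and relabelling symbols commutes with such swaps, each $\Phi_\sigma$ is an automorphism, and the six maps $\{\Phi_\sigma:\sigma\in S_3\}$ form a group isomorphic to $S_3$.

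Next I would establish the partition underlying the factor $6$. The relative order of $1,2,3$ in a vertex is one of the six elements of $S_3$, so $V(B_n)$ splits into six blocks of equal size $n!/6$, one per relative order; the block of increasing order is exactly $V(B'_n)$, and $\Phi_\sigma$ carries it bijectively onto the block of order $\sigma$. Writing $H_\sigma:=\Phi_\sigma(B'_n)$, the six subgraphs are pairwise isomorphic copies of $B'_n$, their vertex sets partition $V(B_n)$, and every edge of $B_n$ occurs in at least one $H_\sigma$: an edge that swaps two symbols not both in $\{1,2,3\}$ keeps the relative order of $1,2,3$ fixed and lies inside a single block, while an edge that swaps two adjacent symbols of $\{1,2,3\}$ joins two blocks and is incident to both. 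This last, shared, type of edge is precisely what links the six copies into the hexagonal pattern of $B_3\cong C_6$ visible in Figure \ref{fig: B6}.

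Then I would invoke the constructed drawing. By the way $D_n$ is assembled from six copies of $D'_n$ (Figures \ref{fig: B6} and \ref{fig: D'6}), there is an order-six group of homeomorphisms of the plane, isomorphic to $S_3$, each element of which sends the drawn image of every vertex $v$ to the drawn image of $\Phi_\sigma(v)$ and every drawn edge to the drawn edge of its $\Phi_\sigma$-image; equivalently, the plane is cut into six congruent sectors, the $\sigma$-sector carrying the subdrawing $D^{(\sigma)}_n$ of $H_\sigma$. This symmetry permutes the crossings of $D_n$, and the heart of the argument is to show that every crossing of $D_n$ is internal to a single sector, so that the six congruent sectors each contribute exactly $\nu(D'_n)$ crossings and the total is $\nu(D_n)=\sum_{\sigma\in S_3}\nu(D^{(\sigma)}_n)=6\nu(D'_n)$.

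The main obstacle is exactly this last point. One must verify from the construction that (i) no two edges lying in different sectors cross, and (ii) the shared edges—those swapping two symbols of $\{1,2,3\}$, which are incident to two copies at once—are routed along the sector boundaries so that they create no inter-sector crossings and are not double counted. Concretely I would use the placement of the vertices $v^1,\dots,v^{n+1}$ along the paths guaranteed by the Observation preceding the definition of $B'_n$, together with the explicit layout in Figure \ref{fig: D'6}, to assign each shared edge to a single sector consistently with the symmetry; once this is checked, counting crossings sector by sector yields the stated equality. Everything else is bookkeeping once the order-six symmetry of the drawing and the sixfold vertex partition are in place.
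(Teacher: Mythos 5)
The paper offers no written proof of this statement at all: it is labelled an Observation, and its entire justification is the phrase ``by symmetry'' together with pointers to Figures \ref{fig: B6} and \ref{fig: D'6}, after $D_n$ has been declared to be ``a drawing of $B_n$ containing six $D'_n$.'' Your proposal is therefore essentially the paper's own argument made explicit, and the combinatorial scaffolding you supply is correct and genuinely fills a gap the paper leaves open. In particular, your identification of $V(B'_n)$ as exactly the vertices in which $1,2,3$ occur in increasing relative order is right (the four listed patterns are precisely the four insertions of $4$ into the increasing word $123$), the relabelling maps $\Phi_\sigma$ for $\sigma\in S_3$ are indeed automorphisms of $B_n$, the partition of $V(B_n)$ into six blocks of size $n!/6$ is correct, and your classification of edges into intra-block edges (lying in one copy of $B'_n$) versus pattern-changing edges (incident to two copies, producing the hexagonal $B_3\cong C_6$ quotient visible in Figure \ref{fig: B6}) is exactly the structure underlying the ``six $D'_n$'' claim. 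None of this appears in the paper.

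That said, the step you yourself flag as ``the main obstacle'' is the actual mathematical content of the equality, and you leave it unresolved --- as does the paper. For $\nu(D_n)=6\nu(D'_n)$ to hold exactly, one needs (i) no crossing between edges drawn in different sectors, and (ii) no double counting on the shared edges: since a pattern-changing edge belongs to $E(H_\sigma)\cap E(H_\tau)$ for two copies, a crossing between two edges shared by the \emph{same} pair of copies would be counted in both subdrawings, so such crossings must not occur. Moreover, because $D'_{n+1}$ is built recursively from $D'_n$ by mesh replacement, these properties must be verified to be preserved at every stage of the induction, not just read off Figure \ref{fig: D'6} for $n=6$; neither your proposal nor the paper carries this out, so both rest on inspection of the figures at the base case plus the locality of the mesh replacement. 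One minor point: whether the planar symmetry realizes $S_3$ (with orientation-reversing reflections) or merely a cyclic rotation of order six is immaterial for the count --- all that is needed is six sectors each containing a congruent copy of $D'_n$ --- so you need not insist on the dihedral structure. In sum: your route matches the paper's (proofless) symmetry appeal on the geometric side and exceeds it on the combinatorial side, but a complete proof would still require the sector-boundary verification you correctly identified and deferred.
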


In Figure \ref{fig: D'6}, we show a drawing $D'_6$. Based on the
drawing $D'_6$, we construct 5 real number axes $R_i$ ($1\leq
i\leq5$) as shown in Figure \ref{fig: D'6}. All of the axes are
parallel and all the vertices of $B'_n$ are drawn precisely on
$R_i$.
\begin{figure}[h]
\centering\includegraphics[scale=0.55]{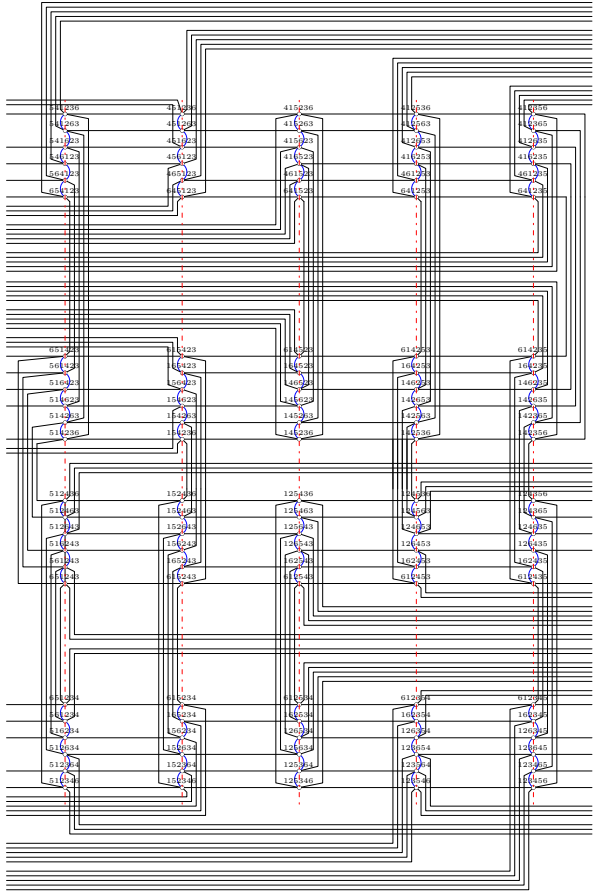}\caption{\small{A
drawing $D'_6$}}\label{fig: D'6}
\end{figure}

\begin{definition}
Let $x$ be a vertex of $B'_n$ on axis $R_i$ and $e=xy$ an edge of
$B'_n$ in $D'_n$. $(1)$ If the edge $e$ shots away the axis $R_i$
left from the vertex $x$, i.e. the part of $e$ in a small enough
neighborhood of $x$ lies in the left of the axis $R_i$, we call $xy$
an \textbf{`l-arc'} with respect to $x$. $(2)$ If $e$ shots away the
axis $R_i$ right from the vertex $x$, we call $xy$ an
\textbf{`r-arc'} with respect to $x$. $(3)$ We write $l(x)=|\{xy\in
E(B'_n):xy \mbox{ is an l-arc with respect to }x\}|$ and
$r(x)=|\{xy\in E(B'_n):xy \mbox{ is an r-arc with respect}$ to
$x\}|$.
\end{definition}

Next we will construct $D'_{n+1}$ from $D'_n$ ($n\geq 6$) and
compute the crossing number $\nu(D'_{n+1})$.

For $1\leq i\leq 5$, let $v$ be the $j$-th vertex lying on $R_i$
(from the top down). We replace $v$ of $B'_n$ in the ``small''
neighborhood of $v$ in the drawing $D'_n$ by path
$P_{v^{n+1}v^{n}\cdots v^{1}}$ ($P_{v^1v^2\cdots v^{n+1}}$) on $R_i$
for odd $j$ (even $j$) (See Figure \ref{fig: D'6}). Now every drawn
edge $e$ in $D'_n$ which started in $v$ will be replaced by $n$
``parallel'' edges (bunch) and drawn along the original edge $e$.
Notice that, in this case, locally we have a drawing of mesh. Doing
this carefully we get a drawing $D'_{n+1}$ for $n\geq 6$.

To construct $D'_{n+1}$ from $D'_n$, we use three kinds of mesh
structures defined as follows:

\begin{definition}
A mesh $M_{n,a}$ $(0\leq a\leq n-2)$ is called a
\textbf{11-structure} mesh, if for $1\leq j\leq
\lfloor\frac{n}{2}\rfloor$, all edges $v^{2j-1}v^{2j}$ are drawn on
the left of $R_i$, and all edges $v^{2j}v^{2j+1}$ are drawn on the
right of $R_i$ $(1\leq i\leq 5)$ $($See Figure $\ref{fig:
11-structure})$.
\end{definition}

\begin{figure}[ht]
\centering\includegraphics[scale=0.8]{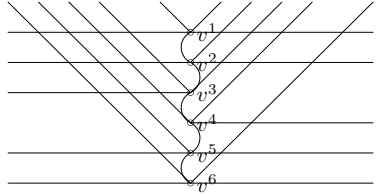}\caption{\small{An
example of 11-structure mesh for $n=6$}} \label{fig: 11-structure}
\end{figure}

\begin{definition}
A mesh $M_{n,a}$ $(0\leq a\leq n-2)$ is called a
\textbf{20-structure} $(\textbf{02-structure})$ mesh, if $n$ is even
and for $1\leq j\leq n-1$, all edges $v^jv^{j+1}$ are drawn on the
left $(right)$ of $R_i$ $(1\leq i\leq 5)$ $($See Figure $\ref{fig:
20-structure&02-structure})$.
\end{definition}
\begin{figure}[ht]
\centering\includegraphics[scale=0.9]{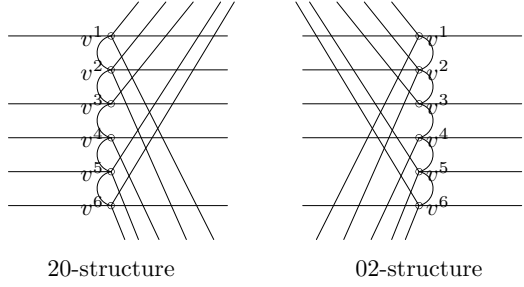}\caption{\small{Examples
of 20-structure mesh and 02-structure mesh for $n=6$}}\label{fig:
20-structure&02-structure}
\end{figure}

Now we introduce a $balanced$ drawing combined with 11-structure,
20-structure and 02-structure meshes. For $n\geq 6$, every vertex
$v\in V(B'_{n})$ will be replaced by the path $P_{v^{n+1}v^{n}\cdots
v^{1}}$ (or $P_{v^1v^2\cdots v^{n+1}}$) of a specific mesh
$M_{n+1,l(v)}(v)$ depending on $l(v)-r(v)$ as shown in Table
\ref{Table: balanced drawing}. We assume that all drawings $D'_n$
($n\geq6$) in the following part are balanced drawings. A part of
$D'_8$ containing 11-structure, 20-structure and 02-structure meshes
is shown in Figure \ref{fig: part of D'8}.

\begin{table}[ht]
\captionstyle{center}
 \caption{ Every vertex in $B'_n$ is replaced by \protect \\ the $(n+1)$-path of a specific
mesh}\label{Table: balanced drawing}
\centering
\begin{tabular}{c c
c} \hline
        $ n$ &  $l(v)-r(v)$& $specific\ mesh$\\
\hline
    $even$& $-1,1$&$11-structure$\\
    $odd$& $0$&$11-structure$\\
    $odd$& $-2$&$20-structure$\\
    $odd$& $2$&$02-structure$\\
\hline
\end{tabular}
\end{table}

\begin{figure}[ht]
\centering\includegraphics[scale=0.8]{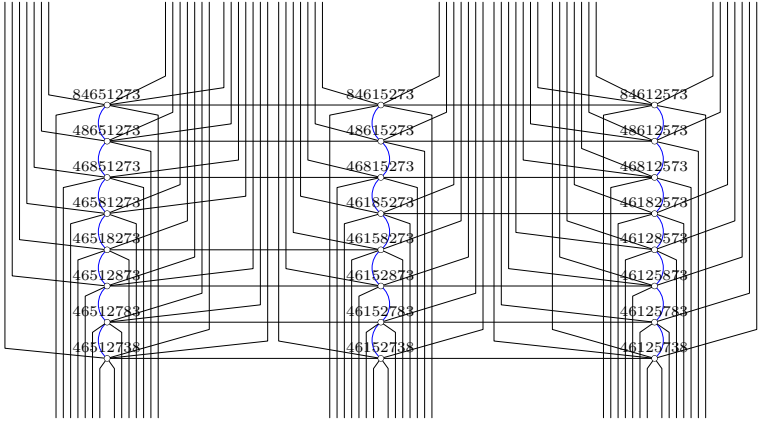}
\captionstyle{center}\caption{\small{A part of $D'_8$ containing
11-structure, 20-structure \protect \\ and 02-structure
meshes}}\label{fig: part of D'8}
\end{figure}

\begin{lemma}\label{lemma: |l-s|<=2}
For any vertex $v$ in the drawing $D'_{n}$ with $n\geq 6$,
$l(v)-r(v)\in \{0,2,-2\}$ for odd $n$, and $l(v)-r(v)\in\{1,-1\}$
for even $n$.
\end{lemma}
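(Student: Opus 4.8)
The plan is to argue by induction on $n$, using the recursive passage from $D'_n$ to $D'_{n+1}$. For the base case $n=6$ one simply reads $l(v)-r(v)$ off the drawing $D'_6$ in Figure \ref{fig: D'6} and checks that every value lies in $\{-1,1\}$ (as $6$ is even). A first useful reduction is a parity observation: every vertex of $B'_n$ carries all of its $B_n$-edges, since by clause $(2)$ of the definition of $B'_n$ any $B_n$-edge at a vertex of $V(B'_n)$ belongs to $E(B'_n)$; hence $l(v)+r(v)=n-1$ and $l(v)-r(v)\equiv n-1 \pmod 2$. Thus $l(v)-r(v)$ is forced to be even for odd $n$ and odd for even $n$, and the genuine content of the lemma becomes the uniform bound $|l(v)-r(v)|\le 2$, from which the prescribed sets $\{0,\pm2\}$ and $\{\pm1\}$ follow by parity.

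For the inductive step I would fix $v\in V(B'_n)$ and set $\Delta=l(v)-r(v)$, which by the induction hypothesis lies in $\{-1,1\}$ for even $n$ and in $\{0,-2,2\}$ for odd $n$; in each case Table \ref{Table: balanced drawing} prescribes a mesh $M_{n+1,l(v)}$ of a definite type. Every vertex of $B'_{n+1}$ is one of the path vertices $v^1,\dots,v^{n+1}$ replacing $v$, and I would write $l(v^k)-r(v^k)$ as the sum of a \emph{bunch contribution} and a \emph{path contribution}. The bunch contribution comes from the $n-1$ groups of parallel semi-lines: the mesh $M_{n+1,l(v)}$ sends $l(v)$ groups left and $r(v)$ groups right, and at each interior point $k\in\{2,\dots,n\}$ exactly one group (the one whose ``lost'' edge sits at $(0,k)$) drops an arc, lowering either the left or the right count by one. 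Hence the bunch contributes $\Delta$ at the endpoints $v^1,v^{n+1}$, and $\Delta-1$ or $\Delta+1$ at an interior $v^k$ according as the lost edge belongs to a left or a right group.

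The path contribution depends only on the mesh type, the side on which a path edge is drawn being fixed by its superscript indices. In an $11$-structure the two path edges at an interior $v^k$ have consecutive smaller indices $k-1$ and $k$, hence opposite orientations, so they cancel and contribute $0$; the top endpoint $v^1$ contributes $+1$, while $v^{n+1}$ contributes $+1$ or $-1$ according to the parity of $n$. In a $20$-structure (resp. $02$-structure) every path edge points left (resp. right), so an interior $v^k$ contributes $+2$ (resp. $-2$) and each endpoint $+1$ (resp. $-1$). Adding the two contributions row by row then closes the step: for even $n$ (so $n+1$ odd) the $11$-structure gives $\Delta\pm1\in\{-2,0,2\}$ at every vertex; for odd $n$ with $\Delta=0$ the $11$-structure gives $\pm1$; and for odd $n$ with $\Delta=-2$ (resp. $+2$) the $20$-structure (resp. $02$-structure) adds $+2$ (resp. $-2$) interior and $+1$ (resp. $-1$) at the ends, shifting every value back into $\{-1,1\}$.

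The main obstacle is the bookkeeping of this case analysis rather than any single hard idea: one must track simultaneously the lost edge at each interior point (which may fall on either side), the orientation of the path edges under each of the three mesh types, and the parity of $n$ that controls the orientation at the top and bottom endpoints. The delicate point, and the reason the argument works, is that the mesh type in Table \ref{Table: balanced drawing} is chosen precisely so that the path contribution compensates the parity of $\Delta$: the $20$- and $02$-structures add $\pm2$ to an interior vertex exactly to absorb the $\mp2$ imbalance already carried by $\Delta$, while the $11$-structure is reserved for the balanced and nearly balanced cases. Checking that all endpoint and interior sub-cases close at once, with no value escaping to $\pm3$, is where the care is needed.
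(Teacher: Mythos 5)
Your proof is correct and takes essentially the same route as the paper's: induction on $n$ with the base case read off $D'_6$, a case split over the parity of $n$ and the value of $l(v)-r(v)$ as dictated by Table \ref{Table: balanced drawing}, tracking at each $v^i$ the one possible lost bunch edge and the orientation of the incident path edges, and finishing with the parity of $l(v^i)+r(v^i)$. Your explicit bunch-plus-path decomposition and upfront parity reduction are just a cleaner packaging of the paper's inequalities of the form $l(v)-1+1\leq l(v^i)\leq l(v)+1$ followed by the same parity argument.
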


\begin{proof}
We prove Lemma \ref{lemma: |l-s|<=2} by induction on n.

1) For $n=6$, there is $l(v)-r(v)\in\{1,-1\}$ for any vertex $v$ in
the drawing $D'_{n}$ (See Figure \ref{fig: D'6}). Hence Lemma
\ref{lemma: |l-s|<=2} holds for $n=6$.

2) Assume Lemma \ref{lemma: |l-s|<=2} holds for $k$. Now consider
the case for $k+1$. There are two cases depending on whether $k$ is
even or odd.

Case 1. $k$ is even. $v$ is replaced by a 11-structure mesh.

For $2\leq i\leq k$, considering the lost edge with respect to
$v^i$, we have $l(v)-1+1\leq l(v^i)\leq l(v)+1$ and $r(v)-1+1\leq
r(v^i)\leq r(v)+1$. For $i=1$ or $k+1$, since there is no lost edge
with respect to $v^i$, we have $l(v)\leq l(v^i)\leq l(v)+1$ and
$r(v)\leq r(v^i)\leq r(v)+1$. It follows, $|l(v^i)-r(v^i)|\leq
|l(v)-r(v)|+1=2$ for all $1 \leq i\leq k+1$. Since $l(v^i)+r(v^i)=k$
is even, we have $l(v^i)-r(v^i)\in \{0,-2,2\}$.

Case 2. $k$ is odd.

Case 2.1. $l(v)-r(v)=0$. $v$ is replaced by a 11-structure mesh.

For $2\leq i\leq k$, we have $l(v)-1+1\leq l(v^i)\leq l(v)+1$ and
$r(v)-1+1\leq r(v^i)\leq r(v)+1$. For $i=1$ or $k+1$, we have
$l(v)\leq l(v^i)\leq l(v)+1$ and $r(v)\leq r(v^i)\leq r(v)+1$. It
follows $|l(v^i)-r(v^i)|\leq |l(v)-r(v)|+1=1$ for all $1 \leq i\leq
k+1$.

Case 2.2. $l(v)-r(v)=2$. $v$ is replaced by a 02-structure mesh.

For $2\leq i\leq k$, we have $l(v)-1\leq l(v^i)\leq l(v)$ and
$r(v)-1+2\leq r(v^i)\leq r(v)+2$. It follows $-1=l(v)-1-r(v)-2\leq
l(v^i)-r(v^i)\leq l(v)-r(v)-1=1$. For $i=1$ or $k+1$, we have
$l(v^i)= l(v)$ and $r(v^i)=r(v)+1$. Then $l(v^i)-r(v^i)=
l(v)-r(v)-1=1$. It follows $|l(v^i)-r(v^i)|\leq 1$ for all $1 \leq
i\leq k+1$.

Case 2.3. $l(v)-r(v)=-2$. $v$ is replaced by a 20-structure mesh.

For $2\leq i\leq k$, we have $l(v)-1+2\leq l(v^i)\leq l(v)+2$ and
$r(v)-1\leq r(v^i)\leq r(v)$. It follows $-1=l(v)+1-r(v)\leq
l(v^i)-r(v^i)\leq l(v)+2-r(v)+1=1$. For $i=1$ or $k+1$, we have
$l(v^i)= l(v)+1$ and $r(v^i)=r(v)$. Then $l(v^i)-r(v^i)=
l(v)+1-r(v)=-1$. It follows $|l(v^i)-r(v^i)|\leq 1$ for all $1 \leq
i\leq k+1$.

By Cases 2.1-2.3, we have $|l(v^i)-r(v^i)|\leq 1$ for odd $k$. Since
$l(v^i)+r(v^i)=k$ is odd, we have $l(v^i)-r(v^i)\in\{-1,1\}$.

By 1) and 2), Lemma \ref{lemma: |l-s|<=2} holds for all $n\geq6$.
\end{proof}

\begin{lemma}\label{lemma: numbers in odd}
For even $n\geq 6$ and any vertex $v$ in the drawing $D'_{n}$, there
is $|\{v^i:l(v^i)-r(v^i)\in\{2,-2\}\}|=\frac{n}{2}$ and
$|\{v^i:l(v^i)-r(v^i)=0\}|=\frac{n}{2}+1$ in $D'_{n+1}$.
\end{lemma}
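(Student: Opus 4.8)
The plan is to track, for each of the $n+1$ new vertices $v^1,\dots,v^{n+1}$ that replace $v$, exactly how its incident edges split between the left and the right of the axis $R_i$, and then to read off $l(v^j)-r(v^j)$. Since $n$ is even, Lemma~\ref{lemma: |l-s|<=2} gives $l(v)-r(v)\in\{1,-1\}$, so by Table~\ref{Table: balanced drawing} the vertex $v$ is replaced by an $11$-structure mesh $M_{n+1,l(v)}$. Because every vertex $v\in V(B'_n)$ retains the full degree $n-1$ of $B_n$, we have $l(v)+r(v)=n-1$, and therefore $\{l(v),r(v)\}=\{\tfrac n2,\tfrac n2-1\}$ according to whether $l(v)-r(v)$ equals $1$ or $-1$.

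Next I would separate the contributions of the two kinds of edges incident to $v^j$. First, the bunch edges: each of the $l(v)$ l-arcs at $v$ becomes a left group and each of the $r(v)$ r-arcs becomes a right group of the mesh, and a group contributes one edge to $v^j$ for every position $j$ except its single lost point. Since the lost points $k_1,\dots,k_{n-1}$ run over all interior positions $\{2,\dots,n\}$ exactly once, each interior vertex $v^j$ ($2\le j\le n$) loses exactly one bunch edge — a left one if $j\in\{k_1,\dots,k_{l(v)}\}$ and a right one otherwise — while the two endpoints $v^1,v^{n+1}$ lose none. Second, the path edges: in the $11$-structure the path edges alternate left, right, left, $\dots$, so every interior $v^j$ carries one path edge on each side, while $v^1$ carries a single left path edge and $v^{n+1}$ a single right path edge.

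Combining the two contributions, I would obtain $l(v^j)-r(v^j)=\bigl(l(v)-r(v)\bigr)+1$ for $v^1$ and for every interior right-lost vertex, and $l(v^j)-r(v^j)=\bigl(l(v)-r(v)\bigr)-1$ for $v^{n+1}$ and for every interior left-lost vertex. Counting, there are $r(v)$ interior right-lost vertices and $l(v)$ interior left-lost vertices, so the two classes have sizes $r(v)+1$ and $l(v)+1$. Substituting $\{l(v),r(v)\}=\{\tfrac n2,\tfrac n2-1\}$ in either case ($l(v)-r(v)=1$ or $-1$) shows that exactly $\tfrac n2$ of the new vertices satisfy $l(v^j)-r(v^j)\in\{2,-2\}$ and exactly $\tfrac n2+1$ satisfy $l(v^j)-r(v^j)=0$, which is the claim; as a consistency check these total $n+1$, the number of new vertices.

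I expect the only delicate point to be the correct treatment of the two path endpoints $v^1$ and $v^{n+1}$, which behave differently from the interior vertices because they carry only one path edge and are not the lost point of any bunch; getting their single extra $+1$ or $-1$ right is precisely what makes the two class sizes come out as $\tfrac n2$ and $\tfrac n2+1$ rather than splitting evenly. The remaining bookkeeping — matching the number of left and right lost points to $l(v)$ and $r(v)$, and using the alternation of the path edges — is routine once the mesh structure has been unwound.
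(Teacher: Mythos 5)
Your proposal is correct and follows essentially the same route as the paper: both arguments fix the 11-structure mesh $M_{n+1,l(v)}$ (using Lemma~\ref{lemma: |l-s|<=2} and $l(v)+r(v)=n-1$), track the single lost bunch edge at each interior vertex together with the one-sided path edges at $v^1$ and $v^{n+1}$, and conclude that the two classes have sizes $l(v)+1$ and $r(v)+1$. The only difference is presentational — you derive one uniform formula $l(v^j)-r(v^j)=(l(v)-r(v))\pm 1$ and substitute, while the paper writes out the two cases $(l(v),r(v))=(\tfrac n2-1,\tfrac n2)$ and $(\tfrac n2,\tfrac n2-1)$ explicitly — so the underlying counting is identical.
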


\begin{proof}By Lemma
\ref{lemma: |l-s|<=2}, $l(v)-r(v)\in\{-1,1\}$ for even $n\geq6$.

Case 1. $(l(v),r(v))=(\frac{n}{2}-1,\frac{n}{2})$.
$$\begin{array}{llll}
&|\{v^i:l(v^i)-r(v^i)\in\{2,-2\}\}|\\
=&|\{v^i:l(v^i)-r(v^i)=-2\}|\\
=&|\{v^i:(l(v^i),r(v^i))=(\frac{n}{2}-1-1+1,\frac{n}{2}+1)\}\cup\\
&\{v^{n+1}:(l(v^{n+1}),r(v^{n+1}))=(\frac{n}{2}-1,\frac{n}{2}+1)\}|\\
=&|\{v^i:(l(v^i),r(v^i))=(\frac{n}{2}-1,\frac{n}{2}+1)\}|+\\
&|\{v^{n+1}:(l(v^{n+1}),r(v^{n+1}))=(\frac{n}{2}-1,\frac{n}{2}+1)\}|\\
=&l(v)+1=\frac{n}{2}-1+1=\frac{n}{2}.\\ \\
&|\{v^i:l(v^i)-r(v^i)=0\}| \hspace{500bp}\\
=&|\{v^i:(l(v^i),r(v^i))=(\frac{n}{2}-1+1,\frac{n}{2}-1+1)\}\cup \\
&\{v^{1}:(l(v^{1}),r(v^{1}))=(\frac{n}{2}-1+1,\frac{n}{2})\}|\\
=&|\{v^i:(l(v^i),r(v^i))=(\frac{n}{2},\frac{n}{2})\}|+|\{v^{1}:(l(v^{1}),r(v^{1}))=(\frac{n}{2},\frac{n}{2})\}|\\
=&r(v)+1=\frac{n}{2}+1.
\end{array}$$
Case 2. $(l(v),r(v))=(\frac{n}{2},\frac{n}{2}-1)$.
$$\begin{array}{llll}
&|\{v^i:l(v^i)-r(v^i)=0\}|\hspace{500bp}\\
=&|\{v^i:(l(v^i),r(v^i))=(\frac{n}{2}-1+1,\frac{n}{2}-1+1)\}\cup\\
&\{v^{n+1}:(l(v^{n+1}),r(v^{n+1}))=(\frac{n}{2},\frac{n}{2}-1+1)\}|\\
=&|\{v^i:(l(v^i),r(v^i))=(\frac{n}{2},\frac{n}{2})\}|+|\{v^{n+1}:(l(v^{n+1}),r(v^{n+1}))=(\frac{n}{2},\frac{n}{2})\}|\\
=&l(v)+1=\frac{n}{2}+1.
\end{array}$$
$$\begin{array}{llll}
&|\{v^i:l(v^i)-r(v^i)\in\{2,-2\}\}|\hspace{500bp}\\
=&|\{v^i:l(v^i)-r(v^i)=2\}|\\
=&|\{v^i:(l(v^i),r(v^i))=(\frac{n}{2}+1,\frac{n}{2}-1-1+1)\}\cup\\
&\{v^{1}:(l(v^{1}),r(v^{1}))=(\frac{n}{2}+1,\frac{n}{2}-1)\}|\\
=&|\{v^i:(l(v^i),r(v^i))=(\frac{n}{2}+1,\frac{n}{2}-1)\}|+\\
&|\{v^{1}:(l(v^{1}),r(v^{1}))=(\frac{n}{2}+1,\frac{n}{2}-1)\}|\\
=&r(v)+1=\frac{n}{2}-1+1=\frac{n}{2}.
\end{array}$$
\end{proof}

To construct $D'_{n+1}$ from $D'_{n}$, it can be obtained that there
are totally 2 types of crossings in $D'_{n+1}$.

\noindent\textbf{Type 1. } \textbf{Crossings produced by the
crossings in $D'_n$}. As shown in Figure \ref{fig: type1}, each
crossing in $D'_n$ will be replaced by $n^2$ crossings in
$D'_{n+1}$.

\begin{figure}[ht]
\centering
\includegraphics[scale=0.8]{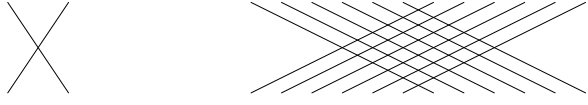}\caption{\small{An example of the
crossings of Type 1} for $n=6$}\label{fig: type1}
\end{figure}

\noindent\textbf{Type 2. } \textbf{Crossings produced by replacing a
vertex of $D'_{n}$ by $n+1$ vertices in $D'_{n+1}$ (See Figure
\ref{fig:(M61&M62)}).}

Then $\nu(D'_{n+1})$ will be $n^2\nu(D'_n)$ plus the sum of the
numbers of the crossings in the small neighborhood of the
$M_{n+1,l(v)}(v)$ for all vertices $v\in V(B'_n)$.

\begin{lemma}\label{lemma: n>=7 odd}
For $n=2m-2$ $(m\geq 4)$,$$ \nu(D'_{n+1})\leq
n^2\nu(D'_{n})+\frac{n!}{144}(3n^4-13n^3+18n^2-8n).$$
\end{lemma}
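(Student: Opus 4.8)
The plan is to isolate the two contributions to $\nu(D'_{n+1})$ already identified just before the statement and to bound each. By the Type 1 / Type 2 analysis, every crossing of $D'_n$ is replaced by exactly $n^2$ crossings, while the only other crossings arise locally in the mesh that replaces each vertex. Hence
$$\nu(D'_{n+1}) = n^2\nu(D'_n) + \sum_{v\in V(B'_n)}\nu_D\big(E(M_{n+1,l(v)}(v))\big),$$
so it suffices to bound the second (Type 2) term by $\frac{n!}{144}(3n^4-13n^3+18n^2-8n)$.

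First I would pin down which mesh each vertex spawns. Since $n=2m-2$ is even, Lemma \ref{lemma: |l-s|<=2} gives $l(v)-r(v)\in\{1,-1\}$ for every $v$, so by Table \ref{Table: balanced drawing} each vertex is replaced by an $11$-structure mesh $M_{n+1,l(v)}$. Because each of the $n-1$ edges of $B'_n$ incident to $v$ becomes one group of the mesh, we have $l(v)+r(v)=n-1=2m-3$; combining this with $l(v)-r(v)=\pm1$ forces $l(v)\in\{m-2,m-1\}$. As $n+1=2m-1$ is odd, this is precisely the case $n=2m-1,\ a\in\{m-2,m-1\}$ of Lemma \ref{lemma: Xn<Max} (whose hypotheses $m\ge4$, $a\ge1$ hold because $n\ge6$), so uniformly in $v$,
$$\nu_D\big(E(M_{n+1,l(v)}(v))\big)\le \tfrac{1}{8}(n+1)^4-\tfrac{25}{24}(n+1)^3+\tfrac{25}{8}(n+1)^2-\tfrac{95}{24}(n+1)+\tfrac{7}{4}.$$

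Next I would count the replaced vertices. The set $V(B'_n)$ consists of those permutations in which the symbols $1,2,3,4$ appear in one of the four prescribed relative orders; each fixed relative order occurs in exactly $n!/4!$ permutations, so $|V(B'_n)|=4\cdot n!/24=n!/6$, which is consistent with the symmetry relation $\nu(D_n)=6\nu(D'_n)$ of Observation \ref{Observation relation}. Summing the uniform per-vertex bound over these $n!/6$ vertices yields
$$\sum_{v\in V(B'_n)}\nu_D\big(E(M_{n+1,l(v)}(v))\big)\le \frac{n!}{6}\Big(\tfrac{1}{8}(n+1)^4-\tfrac{25}{24}(n+1)^3+\tfrac{25}{8}(n+1)^2-\tfrac{95}{24}(n+1)+\tfrac{7}{4}\Big).$$

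Finally, the remaining work is a single polynomial identity: expanding the quartic in $n+1$ and collecting terms gives $\tfrac{1}{8}n^4-\tfrac{13}{24}n^3+\tfrac{3}{4}n^2-\tfrac{1}{3}n=\tfrac{1}{24}(3n^4-13n^3+18n^2-8n)$, so the right-hand side above equals $\frac{n!}{144}(3n^4-13n^3+18n^2-8n)$, and adding $n^2\nu(D'_n)$ gives the claimed inequality. I expect no genuine obstacle here; the only points needing care are (i) verifying the degree identity $l(v)+r(v)=n-1$ so that the $a$-range of Lemma \ref{lemma: Xn<Max} is matched at \emph{every} vertex, and (ii) checking that the substitution $N=n+1$ lands exactly on the stated coefficients, since the constant term must cancel for the bound to collapse to the advertised quartic.
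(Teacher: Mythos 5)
Your proposal is correct and follows essentially the same route as the paper's proof: invoke Lemma \ref{lemma: |l-s|<=2} to get $l(v)\in\{m-2,m-1\}$, apply the $n=2m-1$, $a\in\{m-2,m-1\}$ case of Lemma \ref{lemma: Xn<Max} to each of the $n!/6$ vertices, and simplify the resulting polynomial. You simply make explicit several steps the paper leaves implicit (the identity $l(v)+r(v)=n-1$, the count $|V(B'_n)|=n!/6$, and the expansion in $n+1$), all of which check out.
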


\begin{proof} By Lemma \ref{lemma: |l-s|<=2}, $l(v)-r(v)\in\{-1,1\}$ for any
vertex $v$ in $D'_{n}$. It follows $l(v)\in\{m-2,m-1\}$, by Lemma
\ref{lemma: Xn<Max}, we have
$$\begin{array}{llll}
&\quad\nu(D'_{n+1})\\&\leq
n^2\nu(D'_{n})+\frac{n!}{6}\cdot\left[\frac{1}{8}(n+1)^4-\frac{25}{24}(n+1)^3+\frac{25}{8}(n+1)^2-\frac{95}{24}(n+1)+\frac{7}{4}\right]\\
&=n^2\nu(D'_{n})+\frac{n!}{144}(3n^4-13n^3+18n^2-8n)
\end{array}$$ for $n=2m-2$ and $m\geq 4$.
\end{proof}

\begin{lemma}\label{lemma: n>=7 even}
For $n=2m-1$ $(m\geq 4)$,$$ \nu(D'_{n+1})\leq
n^2\nu(D'_{n})+\frac{(n-1)!}{144}(3n^5-13n^4+21n^3-17n^2+6).$$
\end{lemma}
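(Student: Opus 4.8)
The plan is to mirror the argument of Lemma \ref{lemma: n>=7 odd}, but with an important new twist: here the construction passes from an \emph{odd} dimension $n=2m-1$ to the \emph{even} dimension $n+1=2m$, which forces me to split the vertices of $D'_n$ according to their mesh type. First, by Lemma \ref{lemma: |l-s|<=2}, for odd $n$ every vertex $v$ of $D'_n$ satisfies $l(v)-r(v)\in\{0,2,-2\}$; since $l(v)+r(v)=n-1=2m-2$, this gives $l(v)=m-1$ when $l(v)-r(v)=0$ and $l(v)\in\{m-2,m\}$ when $l(v)-r(v)\in\{-2,2\}$. Each such $v$ is replaced by a mesh $M_{n+1,l(v)}(v)$ with $n+1=2m$, so I would invoke Lemma \ref{lemma: Xn<Max} in its even case ($N=n+1=2m$): the vertices with $l(v)=m-1$ (the 11-structure) fall under the bound I will call $X_1$ (the case $a=m-1$), while those with $l(v)\in\{m-2,m\}$ (the 20- and 02-structures) fall under the larger bound $X_2$ (the case $a\in\{m-2,m\}$). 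Because $X_1\neq X_2$, unlike in Lemma \ref{lemma: n>=7 odd} I cannot simply multiply a single bound by the number of vertices; I must weight the two bounds by how many vertices of each type occur.

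Next I would count the vertices of each type. The vertices of $D'_n$ (odd $n$) are precisely the children of the vertices of $D'_{n-1}$ (even $n-1$), and $|V(B'_{n-1})|=\tfrac{(n-1)!}{6}$. Applying Lemma \ref{lemma: numbers in odd} in the \emph{even} dimension $n-1$, each vertex $w$ of $D'_{n-1}$ produces in $D'_n$ exactly $\tfrac{n-1}{2}$ children with $l-r\in\{2,-2\}$ and $\tfrac{n+1}{2}$ children with $l-r=0$. Summing over all vertices of $D'_{n-1}$ yields $\tfrac{(n-1)!(n-1)}{12}$ vertices of $D'_n$ with $l-r\in\{2,-2\}$ and $\tfrac{(n-1)!(n+1)}{12}$ vertices with $l-r=0$; these total $\tfrac{n!}{6}$, which is a useful sanity check against the known order of $B'_n$.

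Combining the Type 1 crossings ($n^2\nu(D'_n)$) with the Type 2 crossings from the meshes, I then obtain
\[
\nu(D'_{n+1}) \le n^2\nu(D'_n) + \frac{(n-1)!(n+1)}{12}\,X_1 + \frac{(n-1)!(n-1)}{12}\,X_2,
\]
where $X_1,X_2$ are the two even-case bounds of Lemma \ref{lemma: Xn<Max} evaluated at $N=n+1$. The final step is algebraic simplification: writing $(n+1)X_1+(n-1)X_2=2nX_1+(n-1)(X_2-X_1)$ and using $X_2-X_1=\tfrac12N^2-N$ (with $N=n+1$, so this equals $\tfrac12 n^2-\tfrac12$), everything collapses after expanding $X_1$ in powers of $n$ to the claimed added term $\tfrac{(n-1)!}{144}(3n^5-13n^4+21n^3-17n^2+6)$.

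The main obstacle is the bookkeeping in the counting step. One must apply Lemma \ref{lemma: numbers in odd} in the correct direction (ascending from the even dimension $n-1$ to $n$), and one must pool the families $\{l-r=2\}$ and $\{l-r=-2\}$ correctly, since both the 02- and 20-structures fall under the single case $a\in\{m-2,m\}$ of Lemma \ref{lemma: Xn<Max} and therefore share the bound $X_2$. Once the weights $\tfrac{(n-1)!(n\pm1)}{12}$ are fixed, the expansion of $X_1$ and $X_2$ as polynomials in $n$ is routine but must be executed carefully to land exactly on the stated quintic.
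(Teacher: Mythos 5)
Your proposal is correct and takes essentially the same route as the paper's proof: classify the vertices of $D'_n$ via Lemma \ref{lemma: |l-s|<=2} (so $l(v)\in\{m-2,m-1,m\}$), bound each mesh by the two even cases of Lemma \ref{lemma: Xn<Max} at $N=n+1=2m$, count the two vertex types via Lemma \ref{lemma: numbers in odd} applied at the even dimension $n-1$, and simplify; your weighted sum $\frac{(n-1)!(n+1)}{12}X_1+\frac{(n-1)!(n-1)}{12}X_2$ is exactly the paper's factor $\frac{n!}{6}\cdot\frac{1}{n}\left\{\frac{n+1}{2}X_1+\frac{n-1}{2}X_2\right\}$, and your expansion indeed lands on $\frac{(n-1)!}{144}(3n^5-13n^4+21n^3-17n^2+6)$. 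The only difference is presentational: you spell out the counting and the final algebra (via $X_2-X_1=\frac12(n^2-1)$) more explicitly than the paper, which compresses these steps into a single displayed inequality.
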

\begin{proof} By Lemma \ref{lemma: |l-s|<=2}, $l(v)-r(v)\in\{0,-2,2\}$ for any
vertex $v$ in $D'_{n}$. It follows $l(v)\in\{m,m-2,m-1\}$, by Lemmas
\ref{lemma: Xn<Max} and \ref{lemma: numbers in odd}, we have
$$\begin{array}{llll}
&\quad\nu(D'_{n+1})\\
&\leq n^2\nu(D'_{n})+\frac{n!}{6}\cdot\frac{1}{n}\left\{\frac{n+1}{2}\left[\frac{1}{8}(n+1)^4-\frac{25}{24}(n+1)^3+3(n+1)^2-\frac{23}{6}(n+1)\right.\right.\\
&\quad+2]+\left.\frac{n-1}{2}\left[\frac{1}{8}(n+1)^4-\frac{25}{24}(n+1)^3+\frac{7}{2}(n+1)^2-\frac{29}{6}(n+1)+2\right]\right\}\\
&=n^2\nu(D'_{n})+\frac{(n-1)!}{144}(3n^5-13n^4+21n^3-17n^2+6)
\end{array}$$
for $n=2m-1$ and $m\geq4$.
\end{proof}

By Observations \ref{Observation relation} and Lemmas \ref{lemma:
n>=7 odd} and \ref{lemma: n>=7 even}, we can induce a general
expression for $n\geq7$.
\begin{lemma}\label{lemma: n>=7}
For $n\geq7$,$$\begin{array}{llll} &\quad\nu(D_{n})\\&\leq\left(\frac{(n-1)!}{5!}\right)^2\nu(D_6)\\
&\quad+\sum\limits^{n}_{i=2j+1,j\geq3}\left(\left(\frac{(n-1)!}{(i-1)!}\right)^2\cdot\frac{(i-1)!}{24}(3i^4-25i^3+75i^2-95i+42)\right)\\
&\quad+\sum\limits^{n}_{i=2j,j\geq4}\left(\left(\frac{(n-1)!}{(i-1)!}\right)^2\cdot\frac{(i-2)!}{24}(3i^5-28i^4+103i^3-188i^2+164i-48)\right).
\end{array}$$
\end{lemma}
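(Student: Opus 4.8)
The plan is to unroll the two one-step recurrences supplied by Lemmas \ref{lemma: n>=7 odd} and \ref{lemma: n>=7 even} from the base dimension $6$ up to $n$, and then convert the resulting bound on $\nu(D'_n)$ into a bound on $\nu(D_n)$ via Observation \ref{Observation relation}. First I would record both recurrences in the uniform shape $\nu(D'_{k+1})\le k^2\,\nu(D'_k)+c_k$, where the correction $c_k$ depends on the parity of $k$: for even $k$ (Lemma \ref{lemma: n>=7 odd}) $c_k=\frac{k!}{144}(3k^4-13k^3+18k^2-8k)$, and for odd $k$ (Lemma \ref{lemma: n>=7 even}) $c_k=\frac{(k-1)!}{144}(3k^5-13k^4+21k^3-17k^2+6)$. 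Writing the \emph{new} dimension as $i=k+1$, a single binomial expansion shows $3(i-1)^4-13(i-1)^3+18(i-1)^2-8(i-1)=3i^4-25i^3+75i^2-95i+42$ and $3(i-1)^5-13(i-1)^4+21(i-1)^3-17(i-1)^2+6=3i^5-28i^4+103i^3-188i^2+164i-48$, which are precisely the polynomials in the two sums of the statement. I would verify these two identities once at the outset.

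Next I would iterate the recurrence over $k=6,7,\dots,n-1$, which telescopes cleanly. The homogeneous part collects the factor $\prod_{k=6}^{n-1}k^2=\bigl((n-1)!/5!\bigr)^2$ in front of $\nu(D'_6)$, while each correction $c_k$ is carried upward by the remaining product $\prod_{l=k+1}^{n-1}l^2=\bigl((n-1)!/k!\bigr)^2$ (the empty product at $k=n-1$ giving $1$). This yields $\nu(D'_n)\le \bigl((n-1)!/5!\bigr)^2\nu(D'_6)+\sum_{k=6}^{n-1}\bigl((n-1)!/k!\bigr)^2 c_k$. The cleanest rigorous route is induction on $n$ with base $n=7$: in the step one multiplies the bound for $n-1$ by $(n-1)^2$, sending each $(n-2)!/k!$ to $(n-1)!/k!$, and adds the single new summand $c_{n-1}$ (carried factor $1$), which is exactly the parity-appropriate recurrence applied once.

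Finally I would multiply through by $6$ and invoke Observation \ref{Observation relation} in the forms $\nu(D_n)=6\nu(D'_n)$ and $6\nu(D'_6)=\nu(D_6)$, turning the leading term into $\bigl((n-1)!/5!\bigr)^2\nu(D_6)$. Since $6/144=1/24$, the prefactor $6\cdot\bigl((n-1)!/(i-1)!\bigr)^2\cdot\tfrac{(i-1)!}{144}$ collapses to $\bigl((n-1)!/(i-1)!\bigr)^2\tfrac{(i-1)!}{24}$ for odd $i$, and $6\cdot\bigl((n-1)!/(i-1)!\bigr)^2\cdot\tfrac{(i-2)!}{144}$ collapses to $\bigl((n-1)!/(i-1)!\bigr)^2\tfrac{(i-2)!}{24}$ for even $i$ (with $k=i-1$). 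Re-indexing by $i=k+1$ and splitting by the parity of $i$ then reproduces the two displayed sums: odd $i=2j+1$ with $j\ge 3$ (coming from even $k\ge 6$) and even $i=2j$ with $j\ge 4$ (coming from odd $k\ge 7$), each capped at $n$.

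The only genuine work is bookkeeping, and that is also where I expect the main obstacle to lie: confirming the two polynomial shifts $n\mapsto i-1$, checking that the telescoping factorial products give exactly $\bigl((n-1)!/k!\bigr)^2$, and pinning down the parity-dependent summation endpoints (the odd sum begins at $i=7$ from $k=6$, the even sum at $i=8$ from $k=7$, consistent with the $m\ge 4$ hypotheses of both recurrences, and the top index of each sum shifts between $n$ and $n-1$ according to the parity of $n$). There is no conceptual difficulty beyond this careful accounting.
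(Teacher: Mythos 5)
Your proposal is correct and follows essentially the same route as the paper: both unroll the two parity-dependent recurrences of Lemmas \ref{lemma: n>=7 odd} and \ref{lemma: n>=7 even} by induction on $n$ from the base $n=7$, with the polynomial shifts $k\mapsto i-1$ and the factorial products $\prod_{l=k+1}^{n-1}l^2=\bigl((n-1)!/k!\bigr)^2$ doing the bookkeeping. The only cosmetic difference is that the paper multiplies by $6$ at each step (turning the $\nu(D'_\cdot)$ recurrences into $\nu(D_\cdot)$ recurrences via Observation \ref{Observation relation} before inducting), whereas you telescope on $\nu(D'_n)$ and apply the factor of $6$ once at the end; the two are equivalent.
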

\begin{proof} By Observation \ref{Observation relation} and Lemma
\ref{lemma: n>=7 odd}, we have a recursive expression for $n=2m-1$
and $m\geq 4$
$$\begin{array}{llll}
&\quad\nu(D_n)\\&=6\nu(D'_n)\\
&\leq6\Big\{(n-1)^2\nu(D'_{n-1})\\
&\quad\left.+\frac{(n-1)!}{144}[3(n-1)^4-13(n-1)^3+18(n-1)^2-8(n-1)]\right\}\\
&=(n-1)^2\nu(D_{n-1})+\frac{(n-1)!}{24}\left(3n^4-25n^3+75n^2-95n+42\right)\\
&=(n-1)^2\nu(D_{n-1})+\sum\limits^{n}_{i=n}\left(\left(\frac{(n-1)!}{(i-1)!}\right)^2\cdot\frac{(i-1)!}{24}(3i^4-25i^3+75i^2-95i+42)\right).
\end{array}$$

By Observation \ref{Observation relation} and Lemma \ref{lemma: n>=7
even}, we have a recursive expression for $n=2m$ and $m\geq 4$
$$\begin{array}{llll}
&\quad\nu(D_n)\\&=6\nu(D'_n)\\
&\leq6\Big\{(n-1)^2\nu(D'_{n-1})\\
&\quad\left.+\frac{(n-2)!}{144}[3(n-1)^5-13(n-1)^4+21(n-1)^3-17(n-1)^2+6]\right\}\\
&=(n-1)^2\nu(D_{n-1})+\frac{(n-2)!}{24}\left(3n^5-28n^4+103n^3-188n^2+164n-48\right)\\
&=(n-1)^2\nu(D_{n-1})\\
&\quad+\sum\limits^{n}_{i=n}\left(\left(\frac{(n-1)!}{(i-1)!}\right)^2\cdot\frac{(i-2)!}{24}(3i^5-28i^4+103i^3-188i^2+164i-48)\right).
\end{array}$$

Then we proceed by induction on $n$.

1) The base case is $n=7$.
$$\begin{array}{llll}
&\quad\nu(D_{7})\\
&\leq6^2\nu(D_{6})+\sum\limits^{7}_{i=7}\left(\left(\frac{(n-1)!}{(i-1)!}\right)^2\cdot\frac{(i-1)!}{24}(3i^4-25i^3+75i^2-95i+42)\right)\\
&=\left(\frac{(n-1)!}{5!}\right)^2\nu(D_6)\\
&\quad+\sum\limits^{n}_{i=2j+1,j\geq3}\left(\left(\frac{(n-1)!}{(i-1)!}\right)^2\cdot\frac{(i-1)!}{24}(3i^4-25i^3+75i^2-95i+42)\right)\\
&\quad+\sum\limits^{n}_{i=2j,j\geq4}\left(\left(\frac{(n-1)!}{(i-1)!}\right)^2\cdot\frac{(i-2)!}{24}(3i^5-28i^4+103i^3-188i^2+164i-48)\right).
\end{array}$$

2) We assume this lemma is true for $n=k-1$.

For $n=k=2m$ and $m\geq4$, we have
$$\begin{array}{llll}
&\quad\nu(D_{k})\\
&\leq (k-1)^2\nu(D_{k-1})\\
&\quad+\sum\limits^{k}_{i=k}\left(\left(\frac{(k-1)!}{(i-1)!}\right)^2\cdot\frac{(i-2)!}{24}(3i^5-28i^4+103i^3-188i^2+164i-48)\right)\\
&= (k-1)^2\left[\left(\frac{(k-2)!}{5!}\right)^2\nu(D_6)\right.\\
&\quad+\sum\limits^{k-1}_{i=2j+1,j\geq3}\left(\left(\frac{(k-2)!}{(i-1)!}\right)^2\cdot\frac{(i-1)!}{24}(3i^4-25i^3+75i^2-95i+42)\right)\\
&\quad\left.+\sum\limits^{k-1}_{i=2j,j\geq4}\left(\left(\frac{(k-2)!}{(i-1)!}\right)^2\cdot\frac{(i-2)!}{24}(3i^5-28i^4+103i^3-188i^2+164i-48)\right)\right]\\
&\quad+\sum\limits^{k}_{i=k}\left(\left(\frac{(k-1)!}{(i-1)!}\right)^2\cdot\frac{(i-2)!}{24}(3i^5-28i^4+103i^3-188i^2+164i-48)\right)\\
&= \left(\frac{(k-1)!}{5!}\right)^2\nu(D_6)\\
&\quad+\sum\limits^{k}_{i=2j+1,j\geq3}\left(\left(\frac{(k-1)!}{(i-1)!}\right)^2\cdot\frac{(i-1)!}{24}(3i^4-25i^3+75i^2-95i+42)\right)\\
&\quad+\sum\limits^{k}_{i=2j,j\geq4}\left(\left(\frac{(k-1)!}{(i-1)!}\right)^2\cdot\frac{(i-2)!}{24}(3i^5-28i^4+103i^3-188i^2+164i-48)\right).
\end{array}$$
For $n=k=2m+1$ and $m\geq4$, we have
$$\begin{array}{llll}
&\quad\nu(D_{k})\\
&\leq (k-1)^2\nu(D_{k-1})+\sum\limits^{k}_{i=k}\left(\left(\frac{(k-1)!}{(i-1)!}\right)^2\cdot\frac{(i-1)!}{24}(3i^4-25i^3+75i^2-95i+42)\right)\\
\end{array}$$
$$\begin{array}{llll}&= (k-1)^2\left[\left(\frac{(k-2)!}{5!}\right)^2\nu(D_6)\right.\\
&\quad+\sum\limits^{k-1}_{i=2j+1,j\geq3}\left(\left(\frac{(k-2)!}{(i-1)!}\right)^2\cdot\frac{(i-1)!}{24}(3i^4-25i^3+75i^2-95i+42)\right)\\
&\quad\left.+\sum\limits^{k-1}_{i=2j,j\geq4}\left(\left(\frac{(k-2)!}{(i-1)!}\right)^2\cdot\frac{(i-2)!}{24}(3i^5-28i^4+103i^3-188i^2+164i-48)\right)\right]\\
&\quad+\sum\limits^{k}_{i=k}\left(\left(\frac{(k-1)!}{(i-1)!}\right)^2\cdot\frac{(i-1)!}{24}(3i^4-25i^3+75i^2-95i+42)\right)\\
&= \left(\frac{(k-1)!}{5!}\right)^2\nu(D_6)\\
&\quad+\sum\limits^{k}_{i=2j+1,j\geq3}\left(\left(\frac{(k-1)!}{(i-1)!}\right)^2\cdot\frac{(i-1)!}{24}(3i^4-25i^3+75i^2-95i+42)\right)\\
&\quad+\sum\limits^{k}_{i=2j,j\geq4}\left(\left(\frac{(k-1)!}{(i-1)!}\right)^2\cdot\frac{(i-2)!}{24}(3i^5-28i^4+103i^3-188i^2+164i-48)\right).
\end{array}$$
By 1) and 2), Lemma \ref{lemma: n>=7} holds for all $n\geq7$.
\end{proof}

By Lemma \ref{lemma: n>=7}, for $n\geq7$ we have
$$\begin{array}{llll}
&\quad cr(B_n)\\
&\leq\nu(D_n)\\
&\leq\left(\frac{(n-1)!}{5!}\right)^2\nu(D_6)\\
&\quad+\sum\limits^{n}_{i=2j+1,j\geq3}\left(\left(\frac{(n-1)!}{(i-1)!}\right)^2\cdot\frac{(i-1)!}{24}(3i^4-25i^3+75i^2-95i+42)\right)\\
&\quad+\sum\limits^{n}_{i=2j,j\geq4}\left(\left(\frac{(n-1)!}{(i-1)!}\right)^2\cdot\frac{(i-2)!}{24}(3i^5-28i^4+103i^3-188i^2+164i-48)\right)\\
&=\frac{433}{1200}((n-1)!)^2+\sum\limits^{n}_{i=2j+1,j\geq3}\left[\frac{((n-1)!)^2}{24(i-1)!}(3i^4-25i^3+75i^2-95i+42)\right]\\
&\quad+\sum\limits^{n}_{i=2j,j\geq4}\left[\frac{((n-1)!)^2}{24(i-1)!(i-1)}(3i^5-28i^4+103i^3-188i^2+164i-48)\right]\\
&=\frac{433}{1200}((n-1)!)^2+\sum\limits^{n}_{i=2j+1,j\geq3}\frac{((n-1)!)^2}{24}\cdot\left[\frac{3}{(i-5)!}+\frac{5}{(i-4)!}\right]\\
&\quad+\sum\limits^{n}_{i=2j,j\geq4}\frac{((n-1)!)^2}{24}\cdot\left[\frac{3}{(i-5)!}+\frac{5}{(i-4)!}+\frac{3}{(i-3)!}-\frac{6}{(i-2)!}+\frac{6}{(i-1)!(i-1)}\right]\\
&=\frac{433}{1200}((n-1)!)^2+\frac{((n-1)!)^2}{24}\cdot\left[\sum\limits^{n}_{i=7}\left(\frac{3}{(i-5)!}+\frac{5}{(i-4)!}\right)\right.\\
&\quad+\sum\limits^{n}_{i=2j,j\geq4}\left(\frac{3}{(i-3)!}-\frac{6}{(i-2)!}+\frac{6}{(i-1)!(i-1)}\right)\bigg]\\
\end{array}$$
$$\begin{array}{llll}&=\frac{433}{1200}((n-1)!)^2+\frac{((n-1)!)^2}{24}\cdot\left[\left(\frac{3}{2}+\sum\limits^{n-5}_{i=3}\frac{8}{i!}+\frac{5}{(n-4)!}\right)\right.\\
&\quad+\sum\limits^{n}_{i=2j,j\geq4}\left(\frac{3}{(i-3)!}-\frac{6}{(i-2)!}+\frac{6}{(i-1)!(i-1)}\right)\bigg]\\
&=\frac{127}{300}((n-1)!)^2+\frac{((n-1)!)^2}{24}\cdot\left[\frac{5}{(n-4)!}+\sum\limits^{n-5}_{i=3}\frac{8}{i!}\right.\\
&\quad+\sum\limits^{n}_{i=2j,j\geq4}\left(\frac{3}{(i-3)!}-\frac{6}{(i-2)!}+\frac{6}{(i-1)!(i-1)}\right)\bigg]\\
&=((n-1)!)^2\left[\frac{127}{300}+\frac{5}{24(n-4)!}+\sum\limits^{n-5}_{i=3}\frac{1}{3i!}\right.\\
&\quad+\sum\limits^{n}_{i=2j,j\geq4}\left(\frac{1}{8(i-3)!}-\frac{1}{4(i-2)!}+\frac{1}{4(i-1)!(i-1)}\right)\bigg].
\end{array}$$

Therefore, Theorem \ref{theorem: main result} holds for all $n\geq
7$.

{\bf \noindent Acknowledgements}

The authors wish to express their appreciation to the referee for
his or her valuable comments and suggestions, which have led to
improvements in the presentation of this paper.

\end{document}